\documentclass{article}

\usepackage{arxiv}

\usepackage[utf8]{inputenc} 
\usepackage[T1]{fontenc}    
\usepackage{hyperref}       
\usepackage{url}            
\usepackage{booktabs}       
\usepackage{amsfonts}       
\usepackage{nicefrac}       
\usepackage{microtype}      
\usepackage{lipsum}		
\usepackage{graphicx}
\usepackage{natbib}
\usepackage{doi}

\usepackage{bm,enumerate,amsmath,amsthm,rotating,multirow,tabulary,setspace}
\makeatletter
\newcommand*{\centerfloat}{%
  \parindent \z@
  \leftskip \z@ \@plus 1fil \@minus \textwidth
  \rightskip\leftskip
  \parfillskip \z@skip}
\makeatother
\newcommand\ind{\protect\mathpalette{\protect\independenT}{\bot}}
\def\independenT#1#2{\mathrel{\rlap{$#1#2$}\mkern2mu{#1#2}}}

\let\abs=\envert
\newtheorem{theorem}{Theorem}[section]
\newtheorem{corollary}{Corollary}[theorem]
\newtheorem{lemma}[theorem]{Lemma}
\newtheorem{definition}{Definition}[section]

\title{Recoverability and estimation of causal effects under typical multivariable missingness mechanisms}


\author{%
Jiaxin Zhang$^{1,2,*}$ \quad S. Ghazaleh Dashti$^{1,2}$ \quad John B. Carlin$^{1,2}$ \quad Katherine J. Lee$^{1,2}$ \quad Margarita Moreno-Betancur$^{1,2}$ \\
$^1$ Clinical Epidemiology and
Biostatistics Unit, Department of Paediatrics, University of Melbourne \\ 
$^2$ Clinical Epidemiology and Biostatistics Unit, Murdoch Children’s Research Institute \\
\texttt{jiaxin.zhang@mcri.edu.au}
}



\hypersetup{
pdftitle={A template for the arxiv style},
pdfsubject={q-bio.NC, q-bio.QM},
pdfauthor={David S.~Hippocampus, Elias D.~Striatum},
pdfkeywords={First keyword, Second keyword, More},
}

\begin{document}
\maketitle

\begin{abstract}
In the context of missing data, the identifiability or ``recoverability’’ of the average causal effect (ACE) depends not only on the usual causal assumptions but also on missingness assumptions that can be depicted by adding variable-specific missingness indicators to causal diagrams, creating ``missingness directed acyclic graphs’’ (m-DAGs). Previous research described ten canonical m-DAGs, representing typical multivariable missingness mechanisms in epidemiological studies, and mathematically determined the recoverability of the ACE in each case. However, this work assumed no effect modification and did not investigate methods for estimation across these scenarios. Here we extend this research by determining the recoverability of the ACE in settings with effect modification and conducting a simulation study evaluating the performance of widely used missing data methods when estimating the ACE using correctly specified g-computation, which has not been previously studied. Methods assessed were complete case analysis (CCA) and various implementations of multiple imputation (MI) with varying degrees of compatibility with the outcome model used in g-computation. Simulations were based on an example from the Victorian Adolescent Health Cohort Study (VAHCS), where interest was in estimating the ACE of adolescent cannabis use on mental health in young adulthood. We found that the ACE is recoverable when no incomplete variable (exposure, outcome or confounder) causes its own missingness, and non-recoverable otherwise, in simplified versions of the canonical m-DAGs that excluded unmeasured common causes of missingness indicators. Despite this, simulations showed that MI approaches that were compatible with the outcome model in g-computation may enable approximately unbiased estimation across all canonical m-DAGs, except when the outcome causes its own missingness or it causes the missingness of a variable that causes its own missingness. In the latter settings, researchers need to consider sensitivity analysis methods incorporating external information (e.g. delta-adjustment methods). The VAHCS case study illustrates the practical implications of these findings.
\end{abstract}

\keywords{Causal inference \and Directed acyclic graph (DAG) \and G-computation \and Missing data \and Multiple imputation}

\section{Introduction}
In the presence of missing data, the average causal effect (ACE) is identifiable or ``recoverable’’ if it can be consistently estimated from observable data using an appropriate procedure \citep{mohan2013graphical,mohan2014graphical,mohan2021graphical}, i.e., if there exists an ACE estimator that converges to the true ACE with increasing sample size. The recoverability of the ACE depends on the missingness mechanism. The traditional framework of Rubin \citep{rubin2004multiple} classifies missingness mechanisms as ``missing completely at random'' (MCAR), ``missing at random'' (MAR) and ``missing not at random'' (MNAR) categories. An assumption of MAR is widely considered to provide a useful way of identifying settings where missing data methods such as multiple imputation (MI) enable unbiased estimation. However, this approach to analysis planning is problematic for two reasons. Firstly, it is difficult to understand or assess the MAR assumption in settings with multivariable missingness, which is the norm in epidemiological studies \citep{seaman2013meant}. Also, the MAR assumption is a sufficient but not necessary condition for consistent estimation with methods such as MI \citep{rubin2004multiple,seaman2013meant,little2019statistical}.

Recently, Mohan and colleagues proposed missingness directed acyclic graphs (m-DAGs), where variable-specific missingness indicators are added to causal DAGs, as an alternative, more intuitive framework for specifying assumptions regarding the missingness mechanisms \citep{mohan2013graphical}. These graphs allow clear depiction of the assumed causes of missingness in each incomplete variable \citep{mohan2014testability,thoemmes2015graphical,tian2015missing}. Furthermore, Mohan and colleagues investigated graphical conditions for the recoverability of causal effects in general settings \citep{mohan2014graphical,shpitser2015missing,mohan2021graphical}. Moreno-Betancur and colleagues used these theoretical results and other derivations to determine the recoverability of the ACE in the absence of effect modification in ten canonical m-DAGs, representing typical multivariable missingness mechanisms in point-exposure epidemiologic studies \citep{moreno2018canonical}. However, effect modification is likely in such studies and extending these results to the more general setting is needed. 

Once the recoverability of the ACE has been determined, the challenge remains in identifying an appropriate procedure for handling missing data in the context of the chosen approach for causal effect estimation. One such approach is g-computation, which extends multivariable regression to allow for effect modification by including exposure-confounder interactions in an outcome regression model. The ACE is estimated by contrasting the estimated average of the predicted potential outcomes from this model when setting all observations to exposed and nonexposed \citep{robins1986new,snowden2011implementation,Causal2020}. Despite it being a fundamental and practically accessible method for causal effect estimation, there has been scant research into approaches for handling missing data with g-computation. Discussion on the relative performance of widely used approaches, such as complete-case analysis (CCA) (excluding records with missing values in any of the analysis variables) and MI, is primarily limited the problematic MCAR/MAR/MNAR framework and generalised linear regression \citep{sterne2009multiple,lee2021framework,zhang2022should}. However, beyond the need to consider other missingness assumption frameworks, it is increasingly recognised that unbiased estimation with MI relies on it being tailored so that it is ``compatible’’ with the analysis method. 

Compatibility of MI is usually described as the existence of an overarching joint distribution for both the imputation model and the ``analysis model’’, which is suited to the usual context where the target analysis consists in simply fitting a model \citep{meng1994multiple,bartlett2015multiple1}. In these settings, achieving compatibility may require incorporating in MI the interactions in the analysis model \citep{von20098,seaman2012multiple,goldstein2014fitting,tilling2016appropriate,zhang2022should}. For instance, in a ``fully conditional specification (FCS)’’ approach \citep{van2007multiple,white2011multiple}, the univariate imputation models would include these interaction terms. An alternative way to achieve compatibility is the ``substantive model compatible (SMC)’’ FCS approach \citep{bartlett2015multiple1,bartlett2015multiple2}, proposed by Bartlett and colleagues, where the imputation model is obtained by multiplying the substantive (analysis) model with the conditional distribution for each imputed variable. However, how MI should be tailored for g-computation, an estimation method incorporating a two-stage process of regression and prediction, and how important this is, has not been studied. 

This paper aims to (1) investigate the recoverability of the ACE in the presence of effect modification under typical missingness mechanisms depicted by m-DAGs, and (2) evaluate the performance of CCA and six possible implementations of MI tailored to outcome model in g-computation to estimate the ACE. The paper is organised as follows. Section 2 introduces a case study from the Victorian Adolescent Health Cohort Study (VAHCS) \citep{patton2002cannabis}, which was used as a basis for the simulation study. Section 3 provides theoretical results regarding the recoverability of the ACE in ten simplified canonical m-DAGs. Section 4 describes the simulation study investigating the performance of missingness approaches across a range of data generation models and missingness scenarios. Section 5 describes the results of the case study using the VAHCS data. Finally, section 6 summarises and discusses our findings.

\section{Case study}
The VAHCS is a population-based longitudinal cohort study that recruited 1943 participants aged 14-15 years old (1000 females) from Victorian schools in 1992-1993. The study was approved by the Human Research Ethics Committee of the Royal Children’s Hospital, Melbourne, Australia. Participants were surveyed every six months for three years following recruitment (waves 2 to 6, adolescence phase) and then again at wave 7 in 1998 (the young adulthood phase). The causal question of interest here, drawn from \citep{patton2002cannabis}, aimed to estimate the extent to which an intervention to prevent frequent cannabis use in female adolescents would change their risk of depression and anxiety in young adulthood. The exposure measure was self-reported frequent cannabis use in adolescence, where a participant was considered exposed if they reported use of cannabis more than once a week at any adolescent wave, and unexposed otherwise. The outcome (mental health score at wave 7 (age 20-21)) was the log-transformed and standardised Computerised Revised Clinical Interview schedule (CIS-R) \citep{lewis1992manual}. The confounders, selected on substantive grounds, were parental education, parental divorce or separation, antisocial behaviour, depression and anxiety, and frequent alcohol use, all measured across waves 2 to 6. After excluding 39 records with incomplete data for parental education, parental divorce and antisocial behaviour, the proportion exposed was 8.7\% among $n$=961 female participants, which henceforth defines the analytical sample. Table \ref{tab1} shows descriptive statistics and missing data proportions in this sample for each analysis variable as well as for participant's age at wave two (log-transformed and standardised), which will serve as an auxiliary variable for MI. The complete case proportion was 65.3\%.

\section{Recoverability of the ACE in canonical m-DAGs}
\subsection{Notation, definitions and assumptions}
We use $X$ and $Y$ to represent the exposure and outcome, both potentially subject to missing data. Confounders ($\bm{Z}$) are grouped into two sets, a vector of completely observed confounders ($\bm{Z_1}$) and a vector of confounders with missing data ($\bm{Z_2}$). The missingness indicators for exposure, outcome and incomplete confounders are denoted by $M_X$, $M_Y$ and $M_{\bm{Z_2}}$, respectively. Each is a binary indicator coded 1 if there are missing values in the corresponding variable and 0 otherwise. Finally, $\bm{U}$ represents a vector of unmeasured common causes of the exposure and confounders and $\bm{W}$ a vector of unmeasured common causes of the missingness indicators.

The ACE is defined as the difference in the expected value of the potential outcomes under exposure versus no exposure, which can be expressed as $ACE={\rm E}(Y\vert do(X=1))-{\rm E}(Y\vert do(X=0))$, where the $do(X=x)$ operator represents the intervention to set $X$ to $x$ \citep{pearl2009causality}.  

In the absence of missing data, under the identifiability assumptions of exchangeability given $\bm{Z}$, consistency, and positivity, the ACE is identified by \citep{Causal2020}: 
\begin{equation}\label{ACE}
    ACE=\sum_{\bm{z}} {\rm E}(Y\vert \bm{Z}=\bm{z},X=1){\rm P}(\bm{Z}=\bm{z})-\sum_{\bm{z}} {\rm E}(Y\vert \bm{Z}=\bm{z},X=0){\rm P}(\bm{Z}=\bm{z}).
\end{equation}

In the presence of missing data, the identifiability of the ACE is termed recoverability and depends on further assumptions regarding the missingness mechanism. Determining the recoverability of the ACE requires expressing the ACE in terms of the distribution of the observable variables, where every appearance of a partially observed variable in the expression occurs in a probability function conditional on the variable being observed, i.e., on its missingness indicator being equal to 0 \citep{mohan2013graphical}.

Figure \ref{m-DAG} illustrates ten canonical m-DAGs that depict typical missingness mechanisms in an epidemiological setting, characterised as ``canonical’’ by \citep{moreno2018canonical}. The assumptions encapsulated in the canonical m DAGs are summarised in the Appendix. Briefly, the m-DAGs differ in the presence of arrows from $X$ and $\bm{Z_2}$ to their own missingness indicators, and from $Y$ to its own missingness indicator. Note, for m-DAGs with an arrow from $\bm{Z_2}$ to $M_{\bm{Z_2}}$, we assume that there is an arrow from at least one incomplete confounder to its own missingness indicator. 

\subsection{Results for recoverability of the ACE in canonical m-DAGs}\label{the.res.section}
We determined the recoverability of the ACE in the m-DAGs of Figure \ref{m-DAG} in the presence of effect modification by investigating the recoverability of the marginal distribution of the potential outcome using $do$-calculus. For the theoretical work in this section only, we assumed the absence of unmeasured common causes for missingness indicators ($\bm{W}$), corresponding to the m-DAGs with the dashed arrows removed. Table \ref{Recoverability} presents the recoverability results for the ACE and the expression for the marginal distribution of the potential outcomes in terms of observable data for settings where it is recoverable, in the case of binary variables, with formal mathematical proofs provided in the Appendix. Briefly, the ACE is recoverable for m-DAGs where no variable causes its own missingness (m-DAGs A, B and C), and non-recoverable in m-DAG D where there are arrows from $X$ and $\bm{Z_2}$ to their own missingness indicators. As a corollary, the ACE is also non-recoverable in m-DAGs E, F, I and J, which are constructed by adding arrows to m-DAG D (Lemma 4 of \citep{mohan2013graphical}). We conjecture that the ACE is also non-recoverable in m-DAGs G and H (where there is an arrow from $Y$ to its own missingness indicator), although this has not been formally proven. Evidently, m-DAGs D-J are also non-recoverable if there are unmeasured common causes for missingness indicators ($\bm{W}$).

\section{Simulation study}
In this section, we describe a simulation study to assess the performance of CCA and various MI approaches when estimating the ACE using correctly specified g computation across a range of data generation and missingness scenarios, the latter depicted by the m-DAGs of Figure \ref{m-DAG} allowing for the presence of unmeasured common causes of the missingness indicators $\bm{W}$ (the m-DAGs including the dashed arrows). We generated data to mimic the marginal distributions and patterns of association seen in the VAHCS case study in Table \ref{tab1}. 

\subsection{Complete data generation}
Parameter values for the data generation models described next are provided in the Supplementary Material and were based on estimates from the VAHCS data unless stated otherwise. We simulated complete datasets based on Figure \ref{simuDAG} and generated the variables in sequence as follows. We first generated the auxiliary variable $A$ from the standard normal distribution and confounder $C_1$ from a Bernoulli distribution with the same prevalence as in the case study (Table \ref{tab1}). Next, we generated confounders $C_2-C_5$ from Bernoulli distributions in sequence, where the success probabilities were given by logistic regression models conditional on the previously generated variables. Then we generated the exposure $X$ from a Bernoulli distribution with success probability given by a logistic regression on the auxiliary variable and all confounders, considering both low (10\%) and high (50\%) prevalence scenarios by changing the intercept. Given that the confounders and exposure were generated sequentially conditional on previously generated confounders and the auxiliary variable, they were correlated with each other, which (for the purpose of recoverability) is equivalent to having an unmeasured common cause $U$ as in the m-DAGs of Figure \ref{m-DAG}. Lastly, we generated the outcome $Y$ using the following regression model: 
\begin{equation}
\begin{aligned}\label{outgen}
    {\rm E}(Y)&=\beta_0+\beta_1 C_1+\beta_2 C_2+\beta_3 C_3+\beta_4 C_4+\beta_5 C_5+\beta_6 X+\beta_7 C_3X+\beta_8 C_4X\\
    &+ \beta_{1,4} C_1C_4+\beta_{2,4} C_2C_4+ \beta_{3,4} C_3C_4+\beta_{4,5} C_4C_5+\beta_{3,5} C_3C_5,
\end{aligned}
\end{equation}

Six outcome scenarios were considered, where the strength of the exposure-confounder interaction terms relative to the main effect was null, low ($\pm$0.5 times in outcome scenarios II-IV) or extreme ($\pm$3 times in outcome scenarios V and VI) - see Table \ref{outscen}. The true value of the ACE, the target estimand (see Section \ref{target} below), was fixed to 0.3 across all scenarios by modifying the parameter $\beta_6$ in the model (\ref{outgen}). In the presence of exposure-confounder interactions, the required value of $\beta_6$ was found by iterating over a grid, where we generated a large dataset for each grid value and estimated the ACE via g-computation using a correctly specified outcome model. 

The sample size for each of the six outcome scenarios was determined to achieve 80\% power for the ACE in each case to make the results comparable across the scenarios. As a result, the sample sizes were 1400, 2200, 2000, 2700, 2200 and 2000 for outcome scenarios I to VI, respectively, in the setting with 10\% exposure prevalence and 700 for the six outcome scenarios in the 50\% exposure prevalence setting. 

\subsection{Missingness generation}
We assumed the auxiliary variable $A$ was complete and generated missing data in the exposure ($X$), outcome ($Y$), and two confounders ($C_4$ and $C_5$). We generated the missingness indicators following the m-DAGs in Figure \ref{m-DAG}, where $\bm{Z_2}$ represented the set of incomplete confounders, $C_4$ and $C_5$, and $\bm{Z_1}$ represented the set of complete confounders, $C_1$, $C_2$ and $C_3$. We generated a single variable $W$ from the standard normal distribution to incorporate unmeasured common causes of the missingness indicators. For each m-DAG, we designed five missingness scenarios depending on whether they included interactions between exposure and other variables with different properties in the generation of missingness indicators, as follows:
\begin{enumerate}[(i)]
\item No interaction terms
\item Exposure interaction with the strong complete confounder ($XC_3$ interaction)
\item Exposure interaction with the strong incomplete confounder ($XC_4$ interaction)
\item Exposure interaction with the weak incomplete confounder ($XC_5$ interaction)
\item Exposure interaction with the outcome ($XY$ interaction)
\end{enumerate}

Further details for missingness generation are provided in the Supplementary Material. The missingness proportion for each incomplete variable in each scenario was 15\% for $C_4$ and $C_5$, and 20\% for $X$ and $Y$. This resulted in approximately 55\% of observations having complete data.

\subsection{Target estimand and analysis}\label{target}
The target estimand in our study was the ACE, estimated using g-computation based on a correctly specified outcome model, i.e. identical to the outcome generation model \ref{outgen} for each outcome scenario. In the absence of missing data, the ACE can be consistently estimated by g-computation under the aforementioned identifiability assumptions if the outcome model is correctly specified. The standard error can be estimated using the bootstrap. In the CCA and MI approaches described below, we used 240 bootstrap samples. The number of bootstrap samples was based on Andrews' three-step method \citep{hall1986number,andrews2000three,andrews2001evaluation}. 

\subsection{Methods for handling missing data}
The methods we considered for handling missing data were broadly categorised as CCA, the MI-SMC approach and the MI-FCS approaches (all other MI approaches), as listed in Table \ref{methods}. For each incomplete variable, MI approaches used linear regression for the continuous outcome and logistic regression for the binary variables (the confounders and the exposure). 

For MI-FCS approaches, all variables in the outcome model for g-computation as well as the auxiliary variable $A$ were used as predictors in the relevant univariate imputation models. For the MI-EO, MI-EI and MI-Com approaches (three of the MI-FCS approaches), the imputation model was tailored to the target analysis by including the exposure-confounder interaction(s) that were present in the outcome generation model (same as that used in g-computation), i.e. $XC_3$ interaction was included in the imputation model for imputing incomplete confounders and outcome in outcome scenarios II and V. For the same reason, the $XC_4$ interaction was included in the imputation model in outcome scenarios III and VI, and both $XC_3$ and $XC_4$ interactions were included in the imputation model in outcome scenario IV. As recommended in the literature, interactions in the imputation model were updated from the imputed variables at each FCS cycle, referred to as the improved passive FCS approach \citep{seaman2012multiple}. For the MI-SMC approach, the substantive model for imputation was an expanded version of the outcome generation model including the auxiliary variable $A$. 

All MI-FCS approaches were carried out using the {\tt{mice}} command in {\tt{R}} \citep{buuren2010mice}. The MI-SMC approach was carried out using the {\tt{smcfcs}} package in {\tt{R}} \citep{bartlett2015multiple1,bartlett2015multiple2}. For each MI approach, 5 imputed datasets were generated, each obtained after the default 5 iterations. The target analysis was applied to each imputed dataset, and ACE estimates and its standard error from bootstrapping were pooled using Rubin’s rule \citep{rubin2004multiple}.

\subsection{Evaluation criteria}
We simulated 2000 datasets for each scenario. For each method, we report the bias, defined as the difference between the mean of the ACE estimates and the true value (0.3), in both absolute and relative terms (i.e. as a percentage); the empirical standard error (empirical SE), given by the square root of the variance of the 2000 estimates; the estimated standard error (estimated SE), given by the average of 2000 estimated standard errors; and the coverage probability, estimated by the proportion of the 95\% confidence intervals (CI) that contained the target value (0.3) across the 2000 datasets. Monte Carlo standard errors (MCSE) are also reported for all measures \citep{morris2019using}.

\subsection{Simulation results}\label{simu.res.section}
Figure \ref{RB} shows the relative bias of the missing data methods in estimating the ACE in the ten m-DAGs. We will summarise the performance in the 50\% exposure prevalence scenario first, then comment on the 10\% exposure prevalence. We categorise the m-DAGs into three groups based on the presence of arrows from incomplete variables to their own missingness indicator.  

\paragraph{m-DAGs where no variable causes its own missingness: A, B, and C.} The theoretical work showed the ACE was recoverable in these m-DAGs when there was no unmeasured common cause of missingness indicators (i.e. no $\bm{W}$). The simulation results showed that MI methods enabled approximately unbiased estimation in most of these settings, even in the presence of $W$. Specifically, for m-DAG A, CCA and all MI methods were approximately unbiased across all outcome scenarios ($\abs{RB}$ $<$ 6\%), and in m DAG B for outcome scenarios I-V ($\abs{RB}$ $<$ 8\%). In comparison, for m-DAG B in outcome scenario VI (strong interaction between exposure and incomplete confounder), all methods were somewhat biased ($\abs{RB}$ 7-34\%). For m-DAG C, all MI-based methods had small bias for outcome scenarios I-V ($\abs{RB}$ $<$ 15\%), and high bias for outcome scenario VI ($\abs{RB}$ 14-45\%). CCA had high bias for m-DAG C across all outcome scenarios ($\abs{RB}$ $>$ 62\%). Across all settings, the performance of MI-Sim, MI-EO, MI-EI and MI-EC were similar to each other. These methods were more biased than MI-Com and MI-SMC, with MI-SMC being the least biased in outcome scenario VI. 

\paragraph{m-DAGs where the exposure and confounder (but not the outcome) cause their own missingness: D, E, F, and I.} The theoretical work (assuming that there was no $W$) suggested the ACE was non-recoverable in these m-DAGs. However, the simulation results showed that approximately unbiased estimation may still be possible in some of these settings, even in the presence of $W$. For m-DAG D, all missing data methods were approximately unbiased across all outcome scenarios ($\abs{RB}$ $<$ 6\%), except for CCA in outcome scenario VI ($\abs{RB}$ 25-26\%). For m-DAG E, all missing data methods had small biases in outcome scenarios (i)-(v) ($\abs{RB}$ $<$ 14\%, all upward biases), and in comparison, larger biases in outcome scenario (vi) ($\abs{RB}$ 9-34\%, all downward biases). MI-SMC was the exception, returning estimates with small bias ($\abs{RB}$ $<$ 4\%) across all outcome scenarios. For m-DAG F, all missing data methods had high bias for all outcome scenarios ($\abs{RB}$ $>$ 22\%). Bias was highest in missingness scenario (v) ($\abs{RB}$ $>$ 66\%), where exposure-outcome interaction was used in generating missingness in the exposure. For m-DAG I, the performance of the missing data methods was like that for m-DAG F, although MI approaches returned estimates with somewhat smaller biases (overall $\abs{RB}$ $>$ 15\%; $\abs{RB}$ $>$ 32\% in outcome scenario VI; $\abs{RB}$ $>$ 50\% in missingness scenario (v); $\abs{RB}$ $>$ 67\% for CCA in all scenarios) than those from CCA. 

\paragraph{m-DAGs where the outcome causes missingness for itself: G, H and J.} The theoretical work showed that the ACE was non-recoverable in these m-DAGs with no $W$. In the simulations, all missing data methods were considerably biased ($\abs{RB}$ $>$ 32\%) for all scenarios. The highest biases were observed in outcome scenario VI and missingness scenario (v) ($\abs{RB}$ $>$ 100\%).

\paragraph{Low exposure prevalence (10\%) scenario.} Generally, the performance of the missing data methods in the low exposure prevalence (10\%) scenario was worse than in the 50\% exposure scenario. CCA was still unbiased in m-DAGs A and B ($\abs{RB}$ $<$ 5\%), slightly biased in m-DAGs D and E for outcome scenarios I-V ($\abs{RB}$ $<$ 9\%, and $\abs{RB}$ 21-33\% in outcome scenario VI), and highly biased in all other scenarios ($\abs{RB}$ $>$ 58\% in m-DAG C, F-J). MI approaches were more biased in most scenarios for m-DAGs A, B, D, E, F and H with 10\% exposure prevalence compared with the 50\% exposure prevalence scenarios, except for MI-SMC and MI-Com which continued to be approximately unbiased or only slightly biased in m-DAGs A, D and E ($\abs{RB}$ $<$ 15\%). MI approaches were extremely biased in missingness scenarios (ii-iv) for m-DAG C ($\abs{RB}$ 18-99\%). For m-DAGs G and J, all methods were severely biased ($\abs{RB}$ $>$ 62\%). Finally, under m-DAG I, the MI approaches showed biases of varying extent ($\abs{RB}$ ranging from -56\% to 44\%) except for missingness scenario (v) where all approaches were biased ($\abs{RB}$ $>$ 45\%). 

\paragraph{Other performance indicators.} The MCSE for bias ranged from 0.0023 to 0.0036 across m-DAGs in 50\% exposure prevalence and from 0.0025 to 0.007 for 10\% exposure. Figure \ref{EmpSE} shows the empirical SEs for each method. These were generally similar across all the m-DAGs and scenarios for each missing data method in 50\% exposure (0.10-0.16) and were larger in 10\% exposure (0.12-0.47). The CCA yielded the highest empirical SE (0.28-0.47) in m-DAG E for the low exposure prevalence scenario. Comparing the estimated SEs to the empirical SEs, they were approximately the same in the 50\% exposure prevalence scenario (relative error $<$ 7\%, 0.75 percentile was 2\%) and relatively close to each other in the 10\% exposure prevalence scenario (relative error $<$ 32\%, 0.75 percentile was 9.7\%). Figure \ref{Coverage} shows the coverage across the scenarios. For both exposure prevalence scenarios, the coverage probabilities of the 95\% CI given by missing data methods were close to the nominal coverage where there was minimal bias in the point ACE estimate, although some under-coverage was observed in the 10\% exposure prevalence setting, which could be explained by the larger absolute value of estimated SE compared with the empirical SE. Under-coverage was observed when there was bias in the point ACE estimate.

\section{Application to the VAHCS case study}
We first developed an m-DAG for the case study based on substantive knowledge. See Table \ref{VAHCS.assumption}. The closest canonical m-DAG is m-DAG J if all ``likely’’ arrows are strictly present. Under this assumption, our theoretical results show that the ACE is non-recoverable (whether we assume the presence or absence of $\bm{W}$) and ACE estimates obtained using each of the missing data methods, provided in Table \ref{case}, are likely to be biased based on the simulation results. In this setting, researchers need to consider sensitivity analysis methods based on external information, e.g. delta-adjustment methods \citep{leacy2016multiple,moreno2016sensitivity,tompsett2018use}. However, if some of the arrows were missing, these estimates could be interpreted with more confidence, suggesting a moderately negative effect of cannabis use in female adolescents on their mental health in young adulthood (Table \ref{case}).

\section{Discussion}
We investigated the recoverability of the ACE in point-exposure epidemiological studies under typical missingness mechanisms depicted by m-DAGs in the context of effect modification, extending the results in \citep{moreno2018canonical}. Our theoretical results found the ACE to be recoverable in m-DAGs where no variable causes its own missingness (m-DAGs A-C) and non-recoverable in other m-DAGs (m-DAGs G and H conjectured). Furthermore, our simulation study showed that when estimating the ACE using g computation the CCA was approximately unbiased when the outcome did not influence missingness in any variable and was severely biased in all other missingness mechanisms. The ACE could be estimated by all of the considered MI approaches with low bias in m-DAGs where the outcome causes its own missingness or it causes the missingness of a variable that causes its own missingness if the exposure prevalence is moderate (50\%). There was no noticeable difference among MI approaches except where there were strong exposure-incomplete confounder interactions, in which MI-SMC and MI-Com approaches were the least biased across scenarios assessed. With low exposure prevalence, MI approaches performed less well.

\subsection{Theoretical work}
In the context of effect modification, the recoverability of the ACE is inferred to rely on the recoverability of the joint distribution of confounders as it is one of the key factors in the identification formula in the absence of missing data (\ref{ACE}). Given the non-recoverability of the conditional outcome distribution in m-DAGs G and H, we conjectured that the ACE is non-recoverable for these m-DAGs but did not provide a formal proof. Proving non-recoverability is not straightforward because, generally, the non-recoverability of a component in a given decomposition of a target distribution is not a sufficient condition for non-recoverability of the target distribution. Other decompositions might exist for the target distribution such that each component is recoverable, as arises with inverse probability weighting-type expressions \citep{mohan2013graphical,mohan2014graphical,mohan2021graphical,bhattacharya2020identification}. As a result, even automatised identification algorithms, such as that implemented in the R package {\tt{dosearch}} developed by \citep{tikka2019causal}, are not ``complete’’, meaning they cannot establish (non-)recoverability for all problems. Of note, as a post-hoc check, we applied the {\tt{dosearch}} algorithm to our problems, which also showed that the ACE was recoverable in m-DAGs A-C and did not reach a solution for the remaining m-DAGs.

In our proofs, we interpreted an arrow from $\bm{Z_2}$ to $M_{\bm{Z_2}}$ to mean that there is an arrow from at least one incomplete confounder to its own missingness. In the case where missingness in an incomplete confounder is caused by other incomplete confounders but not itself, the same m-DAGs apply, while the recoverability results would differ. Specifically for m-DAG D, the ACE would be recoverable because the joint distribution of the confounders would be recoverable (proof shown in Appendix). We conjecture that the ACE would still be non-recoverable in this setting in m-DAGs E, F and I because the exposure distribution remains non-recoverable. 

\subsection{Simulation study}
In the simulation study, we considered more general m-DAGs, where unmeasured common causes for the missingness indicators were included to capture the dependency amongst these indicators. By depicting missingness mechanisms using m-DAGs, our simulation study allowed us to assess the performance of different missing data methods beyond the MCAR/MAR/MNAR framework, enabling us to develop practical guidance for the estimation of the ACE in settings with multivariable missing data. The theoretical non-recoverability of ACE in m-DAGs D-J is attributed to incomplete variables causing their own missingness. However, if the outcome does not cause missingness in any variable (m-DAGs D and E), we found that all considered methods yielded approximately unbiased estimates. Otherwise, none of the missing data methods achieved unbiased estimation (m-DAGs F-J). Based on our simulation results, we conclude that sensitivity analysis approaches based on external information are necessary if the outcome is likely to cause its own missingness or missingness in any other incomplete variable that causes its own missingness. 

However, for the missingness mechanism where the sensitivity analysis is not required (m-DAGs A-E), our simulation study also showed the influence of many factors on the performance of the MI approaches: exposure prevalence, the strength of the interactions in the outcome, exposure-confounder interaction on missingness. For example, the MI-based methods were approximately unbiased when estimating the ACE in m-DAG C in the 50\% exposure prevalence, but they were biased with 10\% exposure prevalence. It is possible that lower exposure prevalence would result in random violations of the positivity assumption and/or the larger imputation variance. The latter might also explain the bias in MI approaches for m-DAGs A and B. All missing data methods performed worse when there was a strong interaction between exposure and incomplete confounder in the outcome model (larger biases in outcome scenario VI) as well as when the missingness mechanisms included an interaction between exposure and incomplete variables (larger biases in missingness scenarios (iii)-(v)). 

In settings with no interactions in the outcome model, our findings for settings with no interactions in the outcome model were consistent with findings in \citep{moreno2018canonical}, where a simulation study was used to evaluate the performance of CCA and MI in estimating the ACE in the absence of effect modification. In settings with interactions in the outcome model, our finding showed that ensuring compatibility between the imputation and outcome model in g-computation was key to the performance of MI. For the MI-SMC approach, using the outcome model of g-computation (plus auxiliary variables) as the substantive model provided the least biased approach across all scenarios. For the MI-Com approach, the approximately compatible imputation model can give unbiased estimation by including interactions in the way proposed by \citep{tilling2016appropriate}.

\subsection{Strength and limitations}
Our work focused on ten canonical m-DAGs that had been previously constructed to be representative of typical missingness mechanisms in epidemiological studies. As such, our work enabled us to provide practical guidance in the presence of multivariable missing data. However, these m-DAGs employ some simplifications, such as treating all incomplete confounders as a single vector, which ignores subtleties regarding the interpretation of the arrows and ultimately recoverability, as previously mentioned. In the simulation study, we assessed a continuous outcome aligned with VHACS data. But further research simulating a binary outcome would be valuable for assessing methods in the context of non-linear models. Our findings are restricted to the context of a correctly specified analysis model, which may not be the case in practice. It would be of interest for further research to consider the performance of CCA and MI in the context of misspecified analysis models. Additionally, research considering other causal effect estimation methods would be important, furthering existing work on handling missing data with TMLE and the propensity score method \citep{blake2020propensity,dashti2021handling}. Regarding implementing MI approaches, our study considered an auxiliary variable to improve the precision of MI but not one that impacted recoverability in the m-DAGs. Further research could extend recoverability results to canonical m-DAGs that include auxiliary variables. Lastly, the number of imputations was low in our simulation due to computational cost when applying bootstrapping after MI, which may have affected the performance of MI in low exposure prevalence settings. We pooled the ACE estimates and SEs from bootstrapping by applying Rubin’s rules to obtain an overall MI estimate and its SE for the MI approaches. Such a procedure is referred to as MI bootstrap Rubin in \citep{bartlett2020bootstrap}. It has been shown that MI bootstrap Rubin yields confidence intervals with approximately nominal coverage when the imputation and analysis procedures are congenial and correctly specified \citep{bartlett2020bootstrap}. An alternative is to use the von Hippel bootstrapped MI approach, which is recommended to achieve nominal coverage in uncongenial and/or misspecified scenarios \citep{von2021maximum,bartlett2020bootstrap}.

\subsection{Conclusion}
In summary, we investigated the recoverability and unbiased estimation of the ACE under the missingness assumptions depicted by m-DAGs, an intuitive and transparent framework to communicate and assess missingness assumptions. In m-DAGs where ACE can be unbiasedly estimated, we provided recommendations for implementing MI with g-computation based on achieving compatibility with the outcome model. In other m-DAGs, we concluded the necessity of sensitivity analysis. We demonstrated how missingness assumptions were applied to interpret the analysis results by the VAHCS case study.

\section{DECLARATIONS}

\noindent {\bf{Supplementary data}}\\
Supplementary data are available online.
\\
\noindent {\bf{Ethics approval}}\\
The case study used data from the Victorian Adolescent Health Cohort Study. Data collection protocols were approved by The Royal Children’s Hospital’s Ethics in Human Research Committee. Informed parental consent was obtained for each participant prior to entry.
\\
\noindent {\bf{Funding}}\\
This work was supported by funding from the Australian National Health and Medical Research Council (Project Grant 1166023, Career Development Fellowship 1127984 to KJL, Investigator Grant 2009572 to MMB). JZ is funded by the Melbourne Research Scholarship and a top-up scholarship from Statistical Society of Australia. The Murdoch Children’s Research Institute is supported by the Victorian Government’s Operational Infrastructure Support Program. The funding bodies do not have any role in the collection, analysis, interpretation or writing of the study.\\
\noindent {\bf{Data availability}}\\
The data underlying this article will be shared on reasonable request to the corresponding author with the permission of Professor George Patton. The data for simulation study are openly available at \\
{\tt{https://github.com/Jiaxin-Zhang-GitHub/Recoverability-and-estimation.git}} 
\\
\noindent {\bf{Acknowledgements}}\\
The authors would like to thank the  Victorian Centre for Biostatistics (ViCBiostat) Causal Inference group, Missing Data group and other members of ViCBiostat for providing feedback in designing and interpreting the simulation study. 
\\
\noindent {\bf{Author contributions}}\\
JZ, SGD, JBC, KJL and MMB conceived the project and designed the study. JZ completed the theoretical proof and conducted the simulation study and data analysis, with input from co-authors, and drafted the manuscript. MMB, SGD, JBC and KJL provided critical input to the manuscript. 
\\
\noindent {\bf{Conflict of interest}}\\
None declared

\section*{Appendix {\it(Recoverability of average causal effect in canonical m-DAGs)}}

\subsection*{A.1.\enspace Notations and assumptions}\label{assumption}
For substantive variables, we use $X$, $Y$, $\bm{Z_1}$, $\bm{Z_2}$ and $\bm{U}$ to represent the exposure, outcome, vectors of complete and incomplete confounders, and the vector of unknown common causes of exposure and confounders, respectively. The missingness indicator of a incomplete variable is denoted by $M$ with a corresponding subscript (1 if missing and 0 otherwise). We use $\bm{m}=\bm{0}$ as a shorthand for $\{m_X=m_Y=m_{\bm{Z_2}}=0\}$. The causal quantity of interest is the ACE of $X$ on $Y$. To prove its recoverability it is sufficient to prove the recoverability of the marginal distribution of the potential outcome under a given exposure value. In the framework of do-calculus, such a marginal distribution is expressed by ${\rm P}(y|do(x))$ or equivalently ${\rm P}(y|\hat{x})$, where the $do(x)$ operator represents the intervention taking the value $x$. We provide the proof of the recoverability of the ACE in the m-DAGs of Figure \ref{m-DAG} (in the main text) under the assumption that there are no unmeasured common causes for missingness indicators ($\bm{W}$), that is, that the dashed arrows in Figure \ref{m-DAG} are absent. We assumed categorical variables in these results, but the conclusion can easily be extended to continuous variables. 

\subsection*{A.2.\enspace Definitions and established results}
The following theoretical results on recoverability and non-recoverability concepts were developed by \citep{mohan2013graphical, mohan2014graphical}.

\begin{definition}[Recoverability, \citep{mohan2013graphical}]\label{rec}
Given an m-graph $G$, and a target relation $Q$ defined on the variables in the set of observable variables $V$, $Q$ is said to be recoverable in $G$ if there exists an algorithm that produces a consistent estimate of $Q$ for every dataset $D$ such that ${\rm P}(D)$ is (1) compatible with $G$ and (2) strictly positive over complete cases.
\end{definition}

\begin{corollary}[\citep{mohan2013graphical}]
A relation $Q$ is recoverable in $G$ if and only if $Q$ can be expressed in terms of the probability ${\rm P}(O)$ where $O$ is the set of observable variables in $G$. In other words, for any two models $M_1$ and $M_2$ inducing distributions ${\rm P}^{M_1}$ and ${\rm P}^{M_1}$ respectively, if ${\rm P}^{M_1}(O) = {\rm P}^{M_1}(O)>0$ then $Q^{M_1} =Q^{M_2}$.
\end{corollary}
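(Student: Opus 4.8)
The plan is to prove the two directions of the equivalence separately, relying directly on Definition \ref{rec} and on the single observation that an admissible dataset $D$ (one satisfying conditions (1)--(2) of that definition) reveals only the observed-data law ${\rm P}(O)$. First I would point out that the two formulations in the statement are the same condition: saying that $Q$ ``can be expressed in terms of ${\rm P}(O)$'' means precisely that there is a functional $f$ with $Q^{M}=f({\rm P}^{M}(O))$ for every model $M$ compatible with $G$ having ${\rm P}^{M}(O)>0$, and such an $f$ is well defined exactly when any two such models agreeing on ${\rm P}(O)$ agree on $Q$. So it suffices to show that recoverability is equivalent to the functional-determination property that ${\rm P}^{M_1}(O)={\rm P}^{M_2}(O)>0$ implies $Q^{M_1}=Q^{M_2}$.

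For the ``if'' direction I would suppose such an $f$ exists and exhibit a plug-in algorithm. Since $O$ is fully observed, the empirical distribution $\hat{\rm P}_n(O)$ computed from a dataset of size $n$ converges to the true ${\rm P}(O)$ (for categorical variables, by the law of large numbers applied to the finitely many cell frequencies). The estimator $f(\hat{\rm P}_n(O))$ then qualifies as an algorithm in the sense of Definition \ref{rec}, and because $f$ is continuous at every strictly positive ${\rm P}(O)$ for the rational-type expressions that arise here, $f(\hat{\rm P}_n(O))\to f({\rm P}(O))=Q$, giving a consistent estimate for every admissible $D$. Hence $Q$ is recoverable.

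For the ``only if'' direction I would argue by contradiction. Suppose $Q$ is recoverable, so some algorithm returns $\hat{Q}_n(D)$ consistent for every admissible dataset, yet functional determination fails: there are models $M_1,M_2$ compatible with $G$, both with strictly positive observed law, such that ${\rm P}^{M_1}(O)={\rm P}^{M_2}(O)$ but $Q^{M_1}\neq Q^{M_2}$. The crucial point is that a dataset contains only the observable part of each record, so the sampling distribution of $D$ depends on the model solely through ${\rm P}(O)$; data generated under $M_1$ and under $M_2$ therefore have identical laws, and $\hat{Q}_n$, being determined by $D$, has the same limiting behaviour under both. Consistency under $M_1$ forces this limit to equal $Q^{M_1}$ and consistency under $M_2$ forces it to equal $Q^{M_2}$, contradicting $Q^{M_1}\neq Q^{M_2}$.

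I expect the main obstacle to be making the ``only if'' argument fully rigorous, specifically the claim that no algorithm can separate two models sharing ${\rm P}(O)$. This needs two things verified carefully: that both $M_1$ and $M_2$ genuinely lie in the class over which the recoverability algorithm is guaranteed consistent (both are compatible with $G$ and strictly positive over complete cases), and that ``same observed law'' really does imply ``same law of the data'' --- which holds because admissible datasets expose exactly the observable variables and encode nothing about masked values beyond what ${\rm P}(O)$ carries. The ``if'' direction is comparatively routine, the only mild technical point being the continuity and consistency of the plug-in functional, which is immediate in the categorical setting assumed throughout this Appendix.
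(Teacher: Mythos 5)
The paper offers no proof of this corollary at all: it is imported verbatim from Mohan and colleagues as an ``established result'' in Appendix~A.2, so there is nothing internal to compare your argument against. Your proof is the standard one for this kind of identifiability/estimability equivalence, and it is essentially sound. The ``only if'' direction is the substantive half and you have the right mechanism: an admissible dataset is an i.i.d.\ sample from ${\rm P}(O)$ alone (the proxy variables and missingness indicators carry no information about masked values beyond what ${\rm P}(O)$ encodes), so two compatible models sharing a strictly positive observed law induce identical data distributions, and a single consistent estimator cannot converge to two different targets. This is in fact the contrapositive form the paper itself leans on: its proof of Corollary~2 (non-recoverability) proceeds precisely by exhibiting $M_1,M_2$ with ${\rm P}^{M_1}(O)={\rm P}^{M_2}(O)>0$ but $Q^{M_1}\neq Q^{M_2}$, which only establishes non-recoverability \emph{via} the direction of the present corollary that you prove here.

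The one place your argument is genuinely incomplete is the ``if'' direction. Well-definedness of a functional $f$ with $Q^M=f({\rm P}^M(O))$ does not by itself yield a consistent plug-in estimator: $f$ must be continuous (or at least suitably regular) at the true ${\rm P}(O)$, and ``$Q$ is determined by ${\rm P}(O)$'' does not imply this for an arbitrary $f$. You flag this but dismiss it as routine; strictly it is the gap between ``identified'' and ``recoverable'' as Definition~1 defines the latter. It is harmless in the present paper because every recoverability claim is established constructively, by exhibiting $Q$ as a finite sum of products and ratios of strictly positive observable probabilities (e.g.\ the expressions for m-DAGs~A--C in Table~2), for which continuity on the positive simplex is immediate. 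If you want the corollary as a free-standing ``if and only if,'' you should either build continuity into the meaning of ``expressed in terms of ${\rm P}(O)$'' or restrict to the categorical, rational-expression setting the Appendix assumes, and say so explicitly.
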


\begin{theorem}[Conditions for recoverability, \citep{mohan2013graphical}]
A query $Q$ defined over any variable in $V$ is recoverable if it is decomposable into terms of conditional distributions such that every partially observed variable appears only in conditional distributions that condition on their own missingness indicator being zero (i.e. not missing).
\end{theorem}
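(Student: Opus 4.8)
The plan is to reduce the claim to the Corollary above, which states that $Q$ is recoverable if and only if it can be written as a functional of the observable-data distribution ${\rm P}(O)$. Thus it suffices to show that the hypothesised decomposition of $Q$ into conditional distributions --- each of which conditions every partially observed variable on its own missingness indicator being zero --- can be re-expressed entirely in terms of ${\rm P}(O)$. First I would fix notation: for each partially observed substantive variable $V_i$ let $M_i$ be its missingness indicator and let $V_i^*$ denote the corresponding proxy that is actually recorded, so that $V_i^* = V_i$ whenever $M_i = 0$ and $V_i^*$ is missing otherwise. The observable set $O$ then comprises the completely observed variables, all the missingness indicators, and the proxies $V_i^*$.

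The key step is a substitution identity. Consider any factor of the decomposition, a conditional distribution ${\rm P}(A \mid V_i = v, M_i = 0, B)$ in which the partially observed variable $V_i$ appears. Since the event $\{M_i = 0\}$ is conditioned on, and on that event $V_i^* = V_i$ deterministically, conditioning on $\{V_i = v, M_i = 0\}$ describes the same event as conditioning on $\{V_i^* = v, M_i = 0\}$, whence
\begin{equation*}
{\rm P}(A \mid V_i = v, M_i = 0, B) = {\rm P}(A \mid V_i^* = v, M_i = 0, B).
\end{equation*}
Applying this replacement to every partially observed variable in every factor --- which the hypothesis permits, since each such variable only ever occurs alongside its own indicator set to zero --- rewrites the whole expression for $Q$ using nothing but the completely observed variables, the missingness indicators, and the proxies $V_i^*$, i.e.\ purely in terms of functionals of ${\rm P}(O)$.

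It then remains to check that these functionals are well defined and consistently estimable. Here I would invoke the strict-positivity-over-complete-cases requirement of Definition \ref{rec}: it guarantees that each conditioning event of the form $\{V_i^* = v, M_i = 0, \ldots\}$ has positive probability, so each conditional distribution in the rewritten expression is identified and can be estimated consistently from any compatible dataset $D$ (e.g.\ by plug-in of empirical conditional distributions). Combining the substitution identity with positivity shows that $Q$ is a functional of ${\rm P}(O)$, and the Corollary then delivers recoverability.

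The main obstacle I anticipate is making the substitution identity fully rigorous when several partially observed variables, and hence the joint event $\{\bm{m} = \bm{0}\}$, appear together in a single factor: one must argue that all the relevant proxies jointly coincide with their true counterparts on the conditioning event, and that the replacement preserves the factorisation used to chain the terms together. A secondary subtlety is ensuring that the decomposition hypothesised in the statement is one in which \emph{each} factor is individually of observable form, rather than merely a product that telescopes to something observable; but since the hypothesis imposes the indicator-zero condition factor by factor, this is exactly what the term-by-term substitution above secures.
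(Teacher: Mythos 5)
The paper does not actually prove this statement: it is imported as an established result of Mohan, Pearl and Tian \citep{mohan2013graphical}, and in the Appendix only the subsequent Corollary on non-recoverability is given a proof. So there is no in-paper argument to compare yours against; judged on its own terms, your proposal is correct and is essentially the standard proof of this result. The substitution identity is the right key step: because the proxy $V_i^*$ coincides with $V_i$ on the event $\{M_i=0\}$, the conditioning events $\{V_i=v,\,M_i=0\}$ and $\{V_i^*=v,\,M_i=0\}$ are literally the same event, so the two conditional distributions agree. The ``main obstacle'' you anticipate --- several partially observed variables appearing jointly with $\{\bm{m}=\bm{0}\}$ in a single factor --- is in fact no obstacle: on the intersection event all proxies simultaneously equal their true counterparts, so the same one-line identification of events applies to the joint conditioning set, and the factorisation used to chain the terms is untouched because you are only renaming an event, not altering any probability. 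Two points worth making explicit: (i) the hypothesis also covers partially observed variables appearing to the \emph{left} of the conditioning bar, as in ${\rm P}(y\mid \ldots, m_Y=0)$, where the same reasoning shows the conditional law of $Y$ given the event equals that of $Y^*$; your blanket ``every factor'' should be read as including this case. (ii) Strict positivity over complete cases in Definition \ref{rec} guarantees positivity of the sub-events you condition on, and for the categorical variables assumed in the Appendix the plug-in empirical conditional frequencies then give the consistent estimation algorithm that the definition requires beyond mere expressibility in terms of ${\rm P}(O)$.
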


\begin{corollary}\label{non-rec}
A relation $Q$ is non-recoverable in $G$ if either of the following two statements stands:\\
(1) The relation $Q$ can be expressed as the sum of a recoverable distribution plus a non-recoverable distribution.\\
(2) The relation $Q$ can be expressed as the product of a recoverable distribution multiplied by a non-recoverable distribution, if the recoverable distribution is strictly positive.
\end{corollary}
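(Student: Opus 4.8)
The plan is to reduce both statements to the contrapositive form of the characterization of recoverability stated just above, namely: a distribution is recoverable precisely when it is determined by the observable law ${\rm P}(O)$, so a \emph{non}-recoverable distribution $N$ admits two models $M_1, M_2$ that are both compatible with $G$ and strictly positive over complete cases, agree on the observable law (${\rm P}^{M_1}(O) = {\rm P}^{M_2}(O) > 0$), yet satisfy $N^{M_1} \neq N^{M_2}$ at some argument. The entire proof consists of propagating this single disagreement through the assumed decomposition of $Q$, using that the recoverable summand/factor cannot itself distinguish $M_1$ from $M_2$.

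For statement (1), I would fix the pair $M_1, M_2$ witnessing the non-recoverability of $N$. Because the decomposition $Q = R + N$ is an identity holding in every model compatible with $G$, we have $Q^{M_i} = R^{M_i} + N^{M_i}$ for $i = 1, 2$. Since $R$ is recoverable and the two models induce the same observable law, the characterization gives $R^{M_1} = R^{M_2}$ as functions of the argument. Subtracting then yields $Q^{M_1} - Q^{M_2} = N^{M_1} - N^{M_2}$, which is nonzero at the argument witnessing the disagreement of $N$. Hence no function of ${\rm P}(O)$ can equal $Q$ across all compatible models, so by the characterization $Q$ is non-recoverable.

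For statement (2), I would argue analogously from $Q = R \cdot N$, evaluated pointwise at an argument where $N^{M_1}$ and $N^{M_2}$ differ. Recoverability of $R$ gives $R^{M_1} = R^{M_2}$ at that argument, and the strict positivity of $R$ guarantees this common value is nonzero, so that $Q^{M_1} = R \cdot N^{M_1} \neq R \cdot N^{M_2} = Q^{M_2}$ there; non-recoverability of $Q$ then follows exactly as in part (1).

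The main obstacle — indeed the only place where care is genuinely needed — is statement (2): the disagreement of $N$ is guaranteed only at some argument, and a vanishing value of $R$ at that argument would collapse the product and erase the distinction between the two models. This is precisely what the strict-positivity hypothesis excludes, and I would make explicit that it is invoked pointwise at the witnessing argument rather than merely globally. A secondary point to state carefully is that the assumed sum/product decomposition must be an identity valid across all models compatible with $G$, and not merely in a single fixed model, since the characterization of recoverability quantifies over all such models.
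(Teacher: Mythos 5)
Your proposal is correct and follows essentially the same argument as the paper: fix the two models $M_1,M_2$ witnessing non-recoverability of the non-recoverable factor, use recoverability of the other factor to conclude it agrees across the two models, and propagate the disagreement through the sum or (using strict positivity) the product. Your explicit remarks about evaluating pointwise at the witnessing argument and about the decomposition holding across all compatible models are slightly more careful than the paper's terse version, but the route is the same.
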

\begin{proof}
Let $Q_R$ and $Q_N$ be the recoverable and non-recoverable distributions in $G$, respectively. There exist two models $M_1$ and $M_2$ that are compatible with $G$ such that:
\begin{equation*}
\begin{aligned}
{\rm P}^{M_1}(O)&={\rm P}^{M_2}(O)>0,\\
Q_R^{M_1}&=Q_R^{M_2},\\
Q_N^{M_1}&\neq Q_N^{M_2},
\end{aligned}
\end{equation*}
If $Q$ can be expressed by $Q_R+Q_N$, then
\begin{equation*}
Q^{M_1}=Q_R^{M_1}+Q_N^{M_1}\neq Q_R^{M_2}+Q_N^{M_2}=Q^{M_2}.
\end{equation*}
If $Q$ can be expressed by $Q_R Q_N$, then
\begin{equation*}
Q^{M_1}=Q_R^{M_1} Q_N^{M_1}\neq Q_R^{M_2} Q_N^{M_2}=Q^{M_2},
\end{equation*}
given $Q_R$ is strictly positive. Thus, $Q$ is non-recoverable in both cases.
\end{proof}

\begin{theorem}[Non-recoverability of joint distribution, \citep{mohan2014graphical}]\label{non-rec-V}
Given m-DAG $G$, the following conditions are necessary for recoverability of a joint distribution:\\
(1) Any incomplete variable and its missingness indicator are not neighbours, i.e. not connected by arrows, and \\
(2) There does not exist a path any incomplete variable to its missingness indicator in which every intermediate node is both a collider and a substantive variable.
\end{theorem}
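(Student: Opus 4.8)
The strategy is to establish each of the two conditions as \emph{necessary} by proving its contrapositive: assuming the stated structural feature is present, I would exhibit two models $M_1$ and $M_2$ compatible with $G$ that induce the same observable law, ${\rm P}^{M_1}(O)={\rm P}^{M_2}(O)>0$, yet disagree on the target joint distribution, $Q^{M_1}\neq Q^{M_2}$. By the Corollary characterising recoverability in terms of ${\rm P}(O)$, producing such a pair certifies that the joint distribution is non-recoverable. The whole argument then reduces to two explicit counterexample constructions, one per condition.

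For condition (1), suppose some incomplete variable $V_i$ is a neighbour of its own missingness indicator $M_i$; the relevant case is the self-masking arrow $V_i\to M_i$. First I would note that $V_i$ is observed only on records with $M_i=0$, so ${\rm P}(O)$ constrains the conditional ${\rm P}(V_i\mid M_i=0)$ but leaves ${\rm P}(V_i\mid M_i=1)$ entirely unconstrained. I would then build $M_1$ and $M_2$ that share every factor of the $G$-factorisation except the conditional governing $V_i$ among the missing records, choosing two distinct laws for ${\rm P}(V_i\mid M_i=1)$. Because this conditional never enters ${\rm P}(O)$, the two models match on all observable quantities while differing on the marginal ${\rm P}(V_i)$, and hence on $Q$. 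It then remains to check that both models respect the independencies encoded by $G$ and are strictly positive over complete cases, which can be arranged by taking all remaining conditionals strictly positive.

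For condition (2), suppose there is a path from an incomplete variable $V_i$ to its indicator $M_i$ whose every intermediate node is both a collider and a substantive variable. The key observation is that recovering the joint distribution forces us to condition on the intermediate substantive variables being \emph{observed}, i.e. on their missingness indicators equalling zero; since each such indicator is a descendant of the corresponding collider, this conditioning d-connects the path and induces a dependence between $V_i$ and $M_i$ in the observed data that cannot be adjusted away. I would exploit exactly this hidden degree of freedom: construct $M_1$ and $M_2$ that agree on every conditional pinned down by ${\rm P}(O)$ but differ in the component of the joint law that travels along the (observationally inert) collider path, again yielding ${\rm P}^{M_1}(O)={\rm P}^{M_2}(O)$ with $Q^{M_1}\neq Q^{M_2}$.

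The hard part will be condition (2): unlike the self-masking case, where the free conditional is transparently ${\rm P}(V_i\mid M_i=1)$, here I must carry out a careful d-separation analysis to pin down \emph{which} degrees of freedom in the joint distribution are invisible to ${\rm P}(O)$ once the intermediate colliders are themselves incomplete. The subtlety is that conditioning on the indicators of the colliders simultaneously opens the target path and may constrain other factors, so the construction must perturb only the genuinely free component while leaving every observable marginal and conditional, together with strict positivity over complete cases, exactly intact.
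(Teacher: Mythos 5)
The paper does not prove this theorem at all: it is quoted verbatim as an established result of Mohan and Pearl (2014) in the ``Definitions and established results'' subsection, so there is no in-paper proof to compare against. Judged on its own terms, your overall strategy---exhibit two models compatible with $G$ that agree on ${\rm P}(O)>0$ but disagree on the joint---is the right one and is exactly the machinery the paper does develop (its Corollary on sums and products of non-recoverable terms uses the same two-model device). Your treatment of condition (1) is essentially the standard self-masking counterexample, but it has a gap you should close: if $V_i$ has descendants among the observable variables, perturbing ${\rm P}(V_i\mid M_i=1)$ while holding the downstream conditionals fixed \emph{does} change ${\rm P}(O)$. You need to either restrict to models compatible with $G$ in which the children's conditionals are degenerate in $V_i$ (compatibility requires only the Markov factorisation, not that every edge be active), or co-perturb the downstream factors; as written, ``choosing two distinct laws for ${\rm P}(V_i\mid M_i=1)$'' does not by itself leave the observable law intact.

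The more serious gap is condition (2), which you acknowledge but do not resolve. The d-separation observation---that conditioning on the colliders' indicators opens the path from $V_i$ to $M_i$---is motivation, not proof: non-recoverability means \emph{no} algorithm recovers $Q$, not merely that the natural conditioning strategy induces a bias. The only way to certify this is to actually produce the two models, and your proposal stops at describing what they would have to satisfy (``perturb only the genuinely free component while leaving every observable marginal and conditional intact''). Identifying that free component is precisely the content of the theorem, so as it stands the argument for (2) is circular in effect: it asserts the existence of the very degree of freedom it is supposed to construct. To complete it you would need an explicit parametrisation (Mohan and Pearl do this for a minimal collider-path graph and then lift to general $G$ via the edge-addition lemma, the analogue of the paper's Lemma on adding edges), and that construction is the substance of the result rather than a checkable afterthought.
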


\begin{corollary}[Non-recoverability of conditional distribution, \citep{mohan2014graphical}]\label{non-rec-C}
A conditional distribution is non-recoverable in a given m-DAG $G$ if one of the following conditions is true:\\
(1) The variable being conditioned upon and its missingness indicator are neighbours, i.e. connected by arrows. \\
(2) $G$ contains a collider path connecting the variable being conditioned upon and its missingness indicator such that all intermediate nodes of the path are conditioned.
\end{corollary}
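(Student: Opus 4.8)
The plan is to derive this conditional-distribution result from the two pieces already in hand: the non-recoverability algebra of Corollary~\ref{non-rec} and the joint-distribution criterion of Theorem~\ref{non-rec-V}. Write the target as $Q = {\rm P}(\bm{A} \mid \bm{B})$, let $V \in \bm{B}$ be the distinguished conditioned variable whose indicator $M_V$ triggers condition (1) or (2), and anchor everything on the identity
\begin{equation*}
{\rm P}(\bm{A}, \bm{B}) = {\rm P}(\bm{A} \mid \bm{B})\,{\rm P}(\bm{B}) = Q\cdot {\rm P}(\bm{B}).
\end{equation*}
The strategy is to certify that the joint ${\rm P}(\bm{A}, \bm{B})$ on the left is non-recoverable via Theorem~\ref{non-rec-V}, and then to push this non-recoverability onto the factor $Q$ using the product form of Corollary~\ref{non-rec}(2), provided the companion factor ${\rm P}(\bm{B})$ can be shown recoverable and strictly positive.

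Condition (2) fits this template cleanly, so I would treat it first. The path hypothesised there is a collider path from $V$ to $M_V$ whose intermediate nodes all lie in the conditioning set $\bm{B}$; since these intermediates are substantive variables and colliders, the joint ${\rm P}(\bm{A}, \bm{B})$ already violates condition (2) of Theorem~\ref{non-rec-V} and is therefore non-recoverable. At the same time, because $V$ and $M_V$ are merely connected by a collider path (not adjacent), the obstruction is created only by conditioning: in the plain marginal ${\rm P}(\bm{B})$ the colliders are not activated, so I expect ${\rm P}(\bm{B})$ to be recoverable by the ``Conditions for recoverability'' theorem. Feeding a recoverable, strictly positive ${\rm P}(\bm{B})$ and a non-recoverable ${\rm P}(\bm{A},\bm{B})$ into Corollary~\ref{non-rec}(2) then forces $Q$ to be non-recoverable.

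Condition (1) is where I expect the real difficulty. When $V$ and $M_V$ are adjacent, $V$ is an incomplete variable neighbouring its own indicator, so Theorem~\ref{non-rec-V} makes \emph{both} ${\rm P}(\bm{A},\bm{B})$ and the companion ${\rm P}(\bm{B})$ non-recoverable; the product rule of Corollary~\ref{non-rec}(2) can no longer cancel the companion away, and a naive ``divide by ${\rm P}(\bm{B})$'' argument fails. I would instead decompose $Q$ over $M_V$, isolating the observed-data part ${\rm P}(\bm{A}\mid \bm{B}, M_V=0)$ from the residual part that references $V$ in records where it is missing, and try to apply the sum form of Corollary~\ref{non-rec}(1). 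The hard part is showing the residual term is genuinely non-recoverable and does not collapse into the observed-data part: this requires that condition (1) leaves an active path from some variable in $\bm{A}$ through $V$ to $M_V$, i.e.\ $\bm{A}\not\perp M_V \mid \bm{B}$. Indeed, if that d-separation \emph{did} hold one could check on small examples that $Q$ becomes recoverable, so the graphical hypothesis must be read as supplying exactly this active path.

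Consequently, the crux of a rigorous proof is not the algebra but guaranteeing that the disagreement between two models witnessing non-recoverability of the joint survives passage to the conditional --- equivalently, that the residual cannot be re-expressed, by some alternative (e.g.\ inverse-probability-weighting-type) decomposition, as a recoverable quantity. This is the same completeness gap flagged in the Discussion, and for the adjacency case (condition (1)) I would expect to handle it through the explicit two-model construction underlying Theorem~\ref{non-rec-V}, choosing the two compatible models so that they agree on ${\rm P}(O)$ but disagree on ${\rm P}(\bm{A}\mid\bm{B})$ along the active $V$--$M_V$ path.
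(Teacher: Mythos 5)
First, a point of reference: the paper does not prove this corollary. It appears in Appendix~A.2 among the ``Definitions and established results'' and is attributed to Mohan and colleagues; the only result in that subsection the authors prove themselves is Corollary~\ref{non-rec}. So there is no in-paper argument to compare yours against --- a proof has to stand on its own or lean on the cited source.

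On its own terms, your proposal has a genuine gap in both branches. For condition~(2), the factorisation ${\rm P}(\bm{A},\bm{B})=Q\cdot {\rm P}(\bm{B})$ only transfers non-recoverability to $Q$ if the companion factor ${\rm P}(\bm{B})$ is recoverable and strictly positive, and nothing in the hypotheses guarantees this: $\bm{B}$ contains the partially observed variable $V$ itself and possibly other incomplete variables, so ${\rm P}(\bm{B})$ can easily be non-recoverable, in which case a non-recoverable numerator over a non-recoverable denominator can still yield a recoverable ratio and the argument collapses. Your heuristic that ``the colliders are not activated in the plain marginal'' conflates d-separation with recoverability of a marginal, which are distinct questions; moreover, the collider-path condition here (intermediate nodes conditioned upon) does not coincide with condition~(2) of Theorem~\ref{non-rec-V} (intermediate nodes that are colliders \emph{and substantive variables}), so the joint need not even be certified non-recoverable by that theorem. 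For condition~(1) you correctly diagnose that the only viable route is an explicit two-model construction --- two models compatible with $G$ agreeing on the observed-data distribution but disagreeing on ${\rm P}(\bm{A}\mid\bm{B})$ --- but you never carry it out. Identifying the crux is not the same as supplying it; that construction is the entire substance of the proof in the cited work, not a detail that can be deferred.
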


\begin{lemma}[\citep{mohan2013graphical,tian2010identifying}]\label{G.G'}
If a target relation $Q$ is not recoverable in m-graph $G$, then $Q$ is not recoverable in the graph $G'$ resulting from adding a single edge to $G$.
\end{lemma}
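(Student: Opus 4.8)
The plan is to argue via the contrapositive characterisation of recoverability, exploiting the fact that adding an edge can only enlarge the class of compatible distributions. Recall from the corollary immediately following Definition~\ref{rec} that $Q$ is recoverable in an m-graph $G$ precisely when every pair of models compatible with $G$ that agree on the observed-data law $\mathrm{P}(O)>0$ must also agree on $Q$. Negating this, $Q$ is \emph{non}-recoverable in $G$ exactly when there exist two models $M_1$ and $M_2$, both compatible with $G$, satisfying $\mathrm{P}^{M_1}(O)=\mathrm{P}^{M_2}(O)>0$ yet $Q^{M_1}\neq Q^{M_2}$. Since $Q$ is assumed non-recoverable in $G$, I would fix such a witnessing pair $(M_1,M_2)$ at the outset.

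The heart of the argument is a monotonicity claim: every distribution compatible with $G$ is also compatible with $G'=G\cup\{e\}$. I would establish this through the structural-causal-model reading of compatibility. A model compatible with $G$ is one in which each vertex is a deterministic function of its parents in $G$ together with an independent noise term. Adding the edge $e=(A\to B)$ only permits $B$ to depend additionally on $A$; the original $G$-model is recovered as the special $G'$-model in which $B$'s structural function ignores the new argument $A$. Hence any structural causal model respecting $G$ is, verbatim, a model respecting $G'$, so the distribution it induces is compatible with $G'$. Equivalently, edge addition can only destroy d-separations, so the conditional independences enforced by $G'$ form a subset of those enforced by $G$, and any $G$-compatible law automatically satisfies them.

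Combining these two observations completes the proof. The witnessing models $M_1$ and $M_2$, being compatible with $G$, are by the monotonicity claim also compatible with $G'$, and they continue to satisfy $\mathrm{P}^{M_1}(O)=\mathrm{P}^{M_2}(O)>0$ and $Q^{M_1}\neq Q^{M_2}$ using the very same induced distributions. The strict-positivity-over-complete-cases requirement in Definition~\ref{rec} is a property of these distributions rather than of the graph, so it transfers unchanged. Thus $(M_1,M_2)$ witnesses the non-recoverability of $Q$ in $G'$, as required.

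I expect the only genuine obstacle to be the rigorous justification of the monotonicity claim, since it hinges on exactly how ``compatible with an m-graph'' is formalised; the structural-model embedding above renders it routine, but one must check that it covers every type of added edge---among substantive variables, among missingness indicators, or between the two groups---which it does, because the argument never uses the specific role played by the endpoints of $e$.
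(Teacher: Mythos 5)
Your proposal is correct. Note, however, that the paper itself gives no proof of this lemma: it is stated in Appendix A.2 as an established result imported from \citep{mohan2013graphical,tian2010identifying}, so there is no in-paper argument to compare against. Your argument is the standard one underlying the cited result: negate the corollary characterising recoverability to obtain a witnessing pair $(M_1,M_2)$ compatible with $G$ that agree on ${\rm P}(O)>0$ but disagree on $Q$, then observe that compatibility is monotone under edge addition because a structural model respecting $G$ is a structural model respecting $G'$ whose new structural function simply ignores the added parent; the same pair then witnesses non-recoverability in $G'$. The two points you flag are exactly the ones that need care and you handle both: the strict-positivity-over-complete-cases condition is a property of the induced distributions and so transfers, and the SCM embedding (rather than the weaker ``fewer d-separations'' remark, which only speaks to the observational Markov property and not to interventional quantities) is the right vehicle for the monotonicity claim, since $Q$ may be a causal query whose value is determined by the full model rather than by ${\rm P}(O)$ alone. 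The only implicit assumption worth making explicit is that the added edge keeps $G'$ a valid (acyclic) m-graph, which is presupposed by the statement.
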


\subsection*{A.3.\enspace do-calculus and probability results}

Below is an introduction to the rules of do-calculus \citep{pearl2009causality}. Denote the graph after removing all arrows pointing to $x$ as $G_{\overline{X}}$, and the graph after removing all arrows emerging from $x$ as $G_{\underline{X}}$.\\
Rule-1 (Insertion/deletion of observations):
\begin{equation*}
    {\rm P}(y|\hat{x},z,m)={\rm P}(y|\hat{x},m) \text{, if } Y\ind Z|X,M \text{ in } G_{\overline{X}}
\end{equation*}
Rule-2 (Action/observation exchange):
\begin{equation*}
    {\rm P}(y|\hat{x},\hat{z},m)={\rm P}(y|\hat{x},z,m) \text{, if } Y\ind Z|X,M \text{ in } G_{\overline{X},\underline{Z}}
\end{equation*}
Rule-3 (Insertion/deletion of actions):
\begin{equation*}
    {\rm P}(y|\hat{x},\hat{z},m)={\rm P}(y|\hat{x},m) \text{, if } Y\ind Z|X,M \text{ in } G_{\overline{X},\overline{Z(M)}}
\end{equation*}
where $Z(M)$ is the set of $Z$-nodes that are not ancestors of any $M$-node in $G_{\overline{X}}$.

The following two probability results are widely used in the proof.  
\begin{equation}\label{A.eq:1}
\begin{aligned}
    {\rm P}(A=a|B)&=\frac{{\rm P}(A=a,B=b)}{{\rm P}(B=b)}\frac{{\rm P}(A=a,B=b,C=c)}{{\rm P}(A=a,B=b,C=c)}\\
    &=\frac{{\rm P}(A=a,C=c|B)}{{\rm P}(C=c|A,B)}\\
    &=\frac{{\rm P}(A=a|B,C=c){\rm P}(C=c|B)}{{\rm P}(C=c|A,B)}
\end{aligned}
\end{equation}
\begin{equation}\label{A.eq:2}
\begin{aligned}
    {\rm P}(A=a|B)&=\frac{{\rm P}(A=a,B=b)}{{\rm P}(B=b)}\\
    &=\frac{{\rm P}(A=a,B=b)}{\sum_C {\rm P}(B=b,C=c)}\\
    &=\frac{{\rm P}(A=a,B=b)}{\sum_C {\rm P}(B=b,C=c)\frac{{\rm P}(A=a,B=b,C=c)}{{\rm P}(A=a,B=b,C=c)}}\\
    &=\frac{{\rm P}(A=a,B=b)}{\sum_C\frac{{\rm P}(A=a,B=b,C=c)}{{\rm P}(A=a|B,C)}}
\end{aligned}
\end{equation}

\subsection*{A.4.\enspace Recoverability for ACE in m-DAG B}\label{sec.B}
The ACE is recoverable in m-DAG B shown in Figure \ref{m-DAG.B}. 

\begin{proof}
The marginal distribution of the potential outcome ${\rm P}(y|\hat{x})$ is given by
\begin{equation}\label{A.B.1}
    {\rm P}(y|\hat{x})=\sum_{\bm{z_1}}\sum_{\bm{z_2}}{\rm P}(y|\hat{x},\bm{z_1},\bm{z_2}){\rm P}(\bm{z_1},\bm{z_2}|\hat{x}).
\end{equation}

Following Rule-1, we can condition the first term on all missingness indicators:
\begin{equation*}
\begin{aligned}
    {\rm P}(y|\hat{x},\bm{z_1},\bm{z_2})
    &={\rm P}(y|\hat{x},\bm{z_1},\bm{z_2},m_X=0,m_Y=0,m_{\bm{Z_2}}=0)\\
    &={\rm P}(y|\hat{x},\bm{z_1},\bm{z_2},\bm{m}=\bm{0}),
\end{aligned}
\end{equation*}
as $Y\ind M_Y|X,\bm{Z_1},\bm{Z_2}$ in $G_{\overline{X}}$, $Y\ind M_X|X,\bm{Z_1},\bm{Z_2},M_Y$ in $G_{\overline{X}}$, and $Y\ind M_{\bm{Z_2}}|X,\bm{Z_1},\bm{Z_2},M_Y,M_X$ in $G_{\overline{X}}$. Applying Rule-2, 
\begin{equation*}
    {\rm P}(y|\hat{x},\bm{z_1},\bm{z_2},\bm{m}=\bm{0})={\rm P}(y|x,\bm{z_1},\bm{z_2},\bm{m}=\bm{0}),
\end{equation*}
since $Y\ind X|\bm{Z_1},\bm{Z_2},M_Y,M_X,M_{\bm{Z_2}}$ in $G_{\underline{X}}$. Hence, the first term of (\ref{A.B.1}) is recoverable.

The second term of (\ref{A.B.1}) can be simplified as
\begin{equation*}
    {\rm P}(\bm{z_1},\bm{z_2}|\hat{x})={\rm P}(\bm{z_1},\bm{z_2})={\rm P}(\bm{z_1}){\rm P}(\bm{z_2}|\bm{z_1}),
\end{equation*}
by Rule-3, given $(\bm{Z_1},\bm{Z_2})\ind X$ in $G_{\overline{X}}$. Therefore we can express ${\rm P}(\bm{z_2}|\bm{z_1})$ with the use of equation (\ref{A.eq:1}) as
\begin{equation*}
    {\rm P}(\bm{z_2}|\bm{z_1})=\frac{{\rm P}(\bm{z_2}|\bm{z_1},m_X=0,m_{\bm{Z_2}}=0){\rm P}(m_X=0,m_{\bm{Z_2}}=0|\bm{z_1})}{{\rm P}(m_X=0,m_{\bm{Z_2}}=0|\bm{z_1},\bm{z_2})},
\end{equation*}
where both numerators are recoverable. Rewrite the denominator by equation (\ref{A.eq:2}) as
\begin{equation*}
    {\rm P}(m_X=0,m_{\bm{Z_2}}=0|\bm{z_1},\bm{z_2})=\frac{{\rm P}(m_X=0,m_{\bm{Z_2}}=0,\bm{z_1},\bm{z_2})}{\sum_{x'}\frac{{\rm P}(m_X=0,m_{\bm{Z_2}}=0,x',\bm{z_1},\bm{z_2})}{{\rm P}(m_X=0,m_{\bm{Z_2}}=0|x',\bm{z_1},\bm{z_2})}},
\end{equation*}
where ${\rm P}(m_X=0,m_{\bm{Z_2}}=0,\bm{z_1},\bm{z_2})$ and ${\rm P}(m_X=0,m_{\bm{Z_2}}=0,x',\bm{z_1},\bm{z_2})$ are recoverable. Now, 
\begin{equation*}
    {\rm P}(m_X=0,m_{\bm{Z_2}}=0|x',\bm{z_1},\bm{z_2})={\rm P}(m_{\bm{Z_2}}=0|m_X=0,x',\bm{z_1},\bm{z_2}){\rm P}(m_X=0|x',\bm{z_1},\bm{z_2}),
\end{equation*}
where the first factor is recoverable. This is because $M_{\bm{Z_2}}\ind \bm{Z_2}|X,\bm{Z_1},M_X$ in $G$. Therefore, by Rule-1,
\begin{equation*}
    {\rm P}(m_{\bm{Z_2}}=0|m_X=0,x',\bm{z_1},\bm{z_2})={\rm P}(m_{\bm{Z_2}}=0|m_X=0,x',\bm{z_1}).
\end{equation*}
The second factor is also recoverable, because $M_X\ind M_{\bm{Z_2}}|X,\bm{Z_1},\bm{Z_2}$ in $G$. Thus, by Rule-1,
\begin{equation*}
    {\rm P}(m_X=0|x',\bm{z_1},\bm{z_2})={\rm P}(m_X=0|x',\bm{z_1},\bm{z_2},m_{\bm{Z_2}}=0),
\end{equation*}
and $M_X\ind X|\bm{Z_1},\bm{Z_2},M_{\bm{Z_2}}$ in $G$. Thus, by Rule-1,
\begin{equation*}
    {\rm P}(m_X=0|x',\bm{z_1},\bm{z_2},m_{\bm{Z_2}}=0)={\rm P}(m_X=0|\bm{z_1},\bm{z_2},m_{\bm{Z_2}}=0).
\end{equation*}
Therefore, ${\rm P}(\bm{z_2}|\bm{z_1})$ can be expressed as below and is recoverable, as every factor is recoverable:
\begin{equation*}
\begin{aligned}
   {\rm P}(\bm{z_2}|\bm{z_1})&=\frac{{\rm P}(\bm{z_2}|\bm{z_1},m_X=0,m_{\bm{Z_2}}=0){\rm P}(m_X=0,m_{\bm{Z_2}}=0|\bm{z_1})}{{\rm P}(m_X=0,m_{\bm{Z_2}}=0,\bm{z_1},\bm{z_2})}\times \\
&\sum_{x'}\frac{{\rm P}(m_X=0,m_{\bm{Z_2}}=0,x',\bm{z_1},\bm{z_2})}{{\rm P}(m_{\bm{Z_2}}=0|m_X=0,x',\bm{z_1}){\rm P}(m_X=0|\bm{z_1},\bm{z_2},m_{\bm{Z_2}}=0)}.
\end{aligned}
\end{equation*}
Hence, the marginal distribution of the potential outcome ${\rm P}(y|\hat{x})$ is recoverable, by the following expression
\begin{equation*}
\begin{aligned}
   {\rm P}(y|\hat{x})&=\sum_{\bm{z_1}}\sum_{\bm{z_2}}{\rm P}(y|x,\bm{m}=\bm{0},\bm{z_1},\bm{z_2}){\rm P}(\bm{z_1})\times \\
   &\frac{{\rm P}(\bm{z_2}|\bm{z_1},m_X=0,m_{\bm{Z_2}}=0){\rm P}(m_X=0,m_{\bm{Z_2}}=0|\bm{z_1})}{{\rm P}(m_X=0,m_{\bm{Z_2}}=0,\bm{z_1},\bm{z_2})}\times \\
&\sum_{x'}\frac{{\rm P}(m_X=0,m_{\bm{Z_2}}=0,x',\bm{z_1},\bm{z_2})}{{\rm P}(m_{\bm{Z_2}}=0|m_X=0,x',\bm{z_1}){\rm P}(m_X=0|\bm{z_1},\bm{z_2},m_{\bm{Z_2}}=0)}\\
&=\sum_{\bm{z_1}}\sum_{\bm{z_2}}{\rm P}(y|x,\bm{z_1},\bm{z_2},\bm{m}=\bm{0})\times \\
&\sum_{x'}\frac{{\rm P}(m_X=0,m_{\bm{Z_2}}=0,x',\bm{z_1},\bm{z_2})}{{\rm P}(m_{\bm{Z_2}}=0|x',\bm{z_1},m_X=0){\rm P}(m_X=0|\bm{z_1},\bm{z_2},m_{\bm{Z_2}}=0)}.
\end{aligned}    
\end{equation*}
Note the marginal distribution of the potential outcome is recoverable in m-DAG B can be derived from the fact that the potential outcome is recoverable in m-DAG C based on Lemma \ref{G.G'}. The proof for the latter is shown as below.

\end{proof}

\subsection*{A.5.\enspace Recoverability for ACE in m-DAG C}
The ACE is recoverable in m-DAG C shown in Figure \ref{m-DAG.C}.

\begin{proof}
As shown in equation (\ref{A.B.1}), it is possible to express the marginal distribution of the potential outcome as
\begin{equation*}
\begin{aligned}
   {\rm P}(y|\hat{x})&=\sum_{\bm{z_1}}\sum_{\bm{z_2}}{\rm P}(y|\hat{x},\bm{z_1},\bm{z_2}){\rm P}(\bm{z_1},\bm{z_2}|\hat{x})\\
   &=\sum_{\bm{z_1}}\sum_{\bm{z_2}}{\rm P}(y|\hat{x},\bm{z_1},\bm{z_2},m_Y=0){\rm P}(\bm{z_1},\bm{z_2}),
\end{aligned}
\end{equation*}
where the first term is derived by Rule-1, given that $Y\ind M_Y|X,\bm{Z_1},\bm{Z_2}$ in $G_{\overline{X}}$, and the second term is derived by Rule-3, given that  $(\bm{Z_1},\bm{Z_2}) \ind X$ in $G_{\overline{X}}$. We can rewrite the first term as 

\begin{equation}
\begin{aligned}
    &{\rm P}(y|\hat{x},\bm{z_1},\bm{z_2},m_Y=0)\\
    &=\frac{{\rm P}(y|\hat{x},\bm{z_1},\bm{z_2},m_X=m_Y=m_{\bm{Z_2}}=0){\rm P}(m_X=m_{\bm{Z_2}}=0|\hat{x},\bm{z_1},\bm{z_2},m_Y=0)}{{\rm P}(m_X=m_{\bm{Z_2}}=0|\hat{x},y,\bm{z_1},\bm{z_2},m_Y=0)}\\
&=\frac{{\rm P}(y|\hat{x},\bm{z_1},\bm{z_2},m_X=m_Y=m_{\bm{Z_2}}=0){\rm P}(m_X=m_{\bm{Z_2}}=0|\hat{x},\bm{z_1},\bm{z_2},m_Y=0)}{{\rm P}(m_{\bm{Z_2}}=0|\hat{x},y,\bm{z_1},\bm{z_2},m_X=m_Y=0){\rm P}(m_X=0|\hat{x},y,\bm{z_1},\bm{z_2},m_Y=0)},
\end{aligned}
\end{equation}
where the first line used equation (\ref{A.eq:1}). Now we assess the recoverability of four factors in this fraction.

Applying Rule-2 to the first term in the numerator, given that $Y\ind X|\bm{Z_1},\bm{Z_2},M_X,M_Y,M_{\bm{Z_2}}$ in $G_{\underline{X}}$ shows that, 
\begin{equation*}
    {\rm P}(y|\hat{x},\bm{z_1},\bm{z_2},m_X=m_Y=m_{\bm{Z_2}}=0)={\rm P}(y|x,\bm{z_1},\bm{z_2},\bm{m}=\bm{0}),
\end{equation*}
which is recoverable.

As for the second term, 
\begin{equation*}
\begin{aligned}
    {\rm P}(m_X=m_{\bm{Z_2}}=0|\hat{x},\bm{z_1},\bm{z_2},m_Y=0)&={\rm P}(m_X=m_{\bm{Z_2}}=0|\hat{x},\bm{z_1},\bm{z_2})\\
&={\rm P}(m_X=m_{\bm{Z_2}}=0|x,\bm{z_1},\bm{z_2}).
\end{aligned}
\end{equation*}
The first line is derived from Rule-1, given that $(M_{\bm{Z_2}},M_X)\ind M_Y|\bm{Z_1},\bm{Z_2},X$ in $G_{\overline{X}}$, and the second line is derived from Rule-2, given that $(M_{\bm{Z_2}},M_X)\ind X|\bm{Z_1},\bm{Z_2}$ in $G_{\underline{X}}$. 

Next, 
\begin{equation*}
\begin{aligned}
    {\rm P}(m_{\bm{Z_2}}=0|\hat{x},y,\bm{z_1},\bm{z_2},m_X=m_Y=0)&={\rm P}(m_{\bm{Z_2}}=0|\hat{x},y,\bm{z_1},m_X=m_Y=0)\\
&={\rm P}(m_{\bm{Z_2}}=0|x,y,\bm{z_1},m_X=m_Y=0)
\end{aligned}    
\end{equation*}
is recoverable, where the first line is derived from Rule-1, given that $M_{\bm{Z_2}}\ind \bm{Z_2}|(X,Y,\bm{Z_1},M_Y,M_X)$ in $G_{\overline{X}}$, and the second line is derived from Rule-2, given that $M_{\bm{Z_2}}\ind X|(\bm{Z_1},Y,M_X,M_Y)$ in $G_{\underline{X}}$.

Lastly, 
\begin{equation*}
\begin{aligned}
{\rm P}(m_X=0|\hat{x},y,\bm{z_1},\bm{z_2},m_Y=0)&={\rm P}(m_X=0|\hat{x},y,\bm{z_1},\bm{z_2},m_Y=m_{\bm{Z_2}}=0)\\
&={\rm P}(m_X=0|y,\bm{z_1},\bm{z_2},m_Y=m_{\bm{Z_2}}=0)  
\end{aligned}    
\end{equation*}
is also recoverable. The first line is derived from Rule-1, given that $M_X \ind M_{\bm{Z_2}}|(X,Y,\bm{Z_1},\bm{Z_2},M_Y)$ in $G_{\overline{X}}$, and the second line is derived from Rule-3, given that $M_X \ind X|(\bm{Z_1},\bm{Z_2},Y,M_{\bm{Z_2}},M_Y)$ in $G$.
Using these expressions we can rewrite the distribution of the potential outcome as
\begin{equation*}
\begin{aligned}
&{\rm P}(y|\hat{x})\\&=\sum_{\bm{z_1}}\sum_{\bm{z_2}}\frac{{\rm P}(y|x,\bm{z_1},\bm{z_2},\bm{m}=\bm{0}){\rm P}(m_X=m_{\bm{Z_2}}=0|x,\bm{z_1},\bm{z_2}){\rm P}(\bm{z_1},\bm{z_2})}{{\rm P}(m_{\bm{Z_2}}=0|x,y,\bm{z_1},m_X=m_Y=0){\rm P}(m_X=0|y,\bm{z_1},\bm{z_2},m_Y=m_{\bm{Z_2}}=0)} \\
&=\sum_{\bm{z_1}}\sum_{\bm{z_2}}\frac{{\rm P}(y|x,\bm{z_1},\bm{z_2},\bm{m}=\bm{0}){\rm P}(x,\bm{z_1},\bm{z_2},m_X=m_{\bm{Z_2}}=0)}{{\rm P}(m_{\bm{Z_2}}=0|x,y,\bm{z_1},m_X=m_Y=0){\rm P}(m_X=0|y,\bm{z_1},\bm{z_2},m_Y=m_{\bm{Z_2}}=0){\rm P}(x|\bm{z_1},\bm{z_2})},
\end{aligned}    
\end{equation*}
where the second equality arises from Bayes' Theorem.

Thus, the marginal distribution of the potential outcome is recoverable because every factor is recoverable, where the conditional distribution of ${\rm P}(x|\bm{z_1},\bm{z_2})$ is recoverable in m-DAG C by Corollary \ref{cor.3} below.
\end{proof}

\begin{lemma}\label{lemma.1}
In m-DAG C, ${\rm P}(\bm{m}=\bm{0}|x,y,\bm{z_1},\bm{z_2})$ is recoverable.
\end{lemma}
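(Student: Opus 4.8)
The plan is to write the complete-case propensity as a product of three conditional factors, one per missingness indicator, and then to reduce each factor to a quantity in which every partially observed substantive variable ($X$, $Y$ or $\bm{Z_2}$) appears only in a distribution conditioning on its own missingness indicator being zero; recoverability then follows from the decomposition criterion of \citep{mohan2013graphical}. Concretely, I would factorise in the order $M_Y, M_{\bm{Z_2}}, M_X$:
\begin{equation*}
\begin{aligned}
{\rm P}(\bm{m}=\bm{0}|x,y,\bm{z_1},\bm{z_2})
&={\rm P}(m_Y=0|x,y,\bm{z_1},\bm{z_2})\\
&\quad\times {\rm P}(m_{\bm{Z_2}}=0|m_Y=0,x,y,\bm{z_1},\bm{z_2})\\
&\quad\times {\rm P}(m_X=0|m_Y=0,m_{\bm{Z_2}}=0,x,y,\bm{z_1},\bm{z_2}).
\end{aligned}
\end{equation*}
For each factor I would alternate two moves: (i) \emph{delete} a partially observed variable from the conditioning set when it is d-separated from the indicator being modelled, and (ii) \emph{insert} one of the remaining indicators set to zero, which is legitimate whenever that indicator is conditionally independent of the modelled indicator given the current conditioning set.

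The last two factors are routine. For the third factor, $M_X\ind X|\bm{Z_1},\bm{Z_2},Y,M_{\bm{Z_2}},M_Y$ in m-DAG C lets me drop $x$, leaving ${\rm P}(m_X=0|m_Y=0,m_{\bm{Z_2}}=0,y,\bm{z_1},\bm{z_2})$, in which $y$ and $\bm{z_2}$ are each conditioned on their own indicator being zero. For the second factor, $M_{\bm{Z_2}}\ind M_X|X,Y,\bm{Z_1},\bm{Z_2},M_Y$ lets me insert $m_X=0$, after which $M_{\bm{Z_2}}\ind\bm{Z_2}|X,Y,\bm{Z_1},M_Y,M_X$ lets me drop $\bm{z_2}$, giving ${\rm P}(m_{\bm{Z_2}}=0|m_X=0,m_Y=0,x,y,\bm{z_1})$, in which $x$ and $y$ are now each paired with their own indicator. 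All of these conditional independencies are exactly the ones already established in the recoverability proof for m-DAG C above.

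The delicate factor — and the step I expect to be the main obstacle — is the first, ${\rm P}(m_Y=0|x,y,\bm{z_1},\bm{z_2})$: here $y$ sits in the conditioning set while $M_Y$ is being modelled, and a priori $x$ and $\bm{z_2}$ appear without their own indicators. The resolution exploits the defining feature of m-DAGs A--C that the outcome does not cause its own missingness. Since $Y\ind M_Y|X,\bm{Z_1},\bm{Z_2}$ holds in m-DAG C, $y$ can be deleted outright, leaving ${\rm P}(m_Y=0|x,\bm{z_1},\bm{z_2})$; then, because $(M_X,M_{\bm{Z_2}})\ind M_Y|X,\bm{Z_1},\bm{Z_2}$, I can insert $m_X=0$ and $m_{\bm{Z_2}}=0$ to obtain ${\rm P}(m_Y=0|x,\bm{z_1},\bm{z_2},m_X=0,m_{\bm{Z_2}}=0)$, in which $x$ and $\bm{z_2}$ are now conditioned on their own indicators. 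The one point requiring care is that these independencies must be verified in the full graph $G$ (not only in $G_{\overline{X}}$ as invoked in the $do$-calculus steps of the m-DAG C proof); this holds because the only $Y$--$M_Y$ connections in m-DAG C are either blocked by conditioning on $X,\bm{Z_1},\bm{Z_2}$ or run through colliders (e.g. at $M_{\bm{Z_2}}$) that are not activated, since $M_{\bm{Z_2}}$ is not in the conditioning set of this factor.

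Having rewritten all three factors so that every occurrence of $X$, $Y$ and $\bm{Z_2}$ conditions on $M_X=0$, $M_Y=0$ and $M_{\bm{Z_2}}=0$ respectively, each factor is recoverable, and hence so is their product ${\rm P}(\bm{m}=\bm{0}|x,y,\bm{z_1},\bm{z_2})$. I would note in passing that, in contrast to the analogous computation for m-DAG B, no marginalisation identity of the form (\ref{A.eq:2}) is needed here: because we already condition on both $x$ and $y$, the insertion/deletion steps justified by the conditional independencies among the missingness indicators suffice on their own.
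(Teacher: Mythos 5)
Your proposal is correct and follows essentially the same route as the paper's proof: a chain-rule factorisation of ${\rm P}(\bm{m}=\bm{0}|x,y,\bm{z_1},\bm{z_2})$ over the three missingness indicators, with each factor reduced by graphically justified insertion/deletion (Rule-1) steps to the same three recoverable terms the paper obtains, namely ${\rm P}(m_Y=0|x,\bm{z_1},\bm{z_2},m_X=m_{\bm{Z_2}}=0)$, ${\rm P}(m_X=0|y,\bm{z_1},\bm{z_2},m_Y=m_{\bm{Z_2}}=0)$ and ${\rm P}(m_{\bm{Z_2}}=0|x,y,\bm{z_1},m_X=m_Y=0)$. The only difference is the order in which you peel off the indicators ($M_Y$ first rather than last), which merely permutes which insertions are required; the conditional independencies you invoke do hold in $G$ for m-DAG C, as you argue.
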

\begin{proof}
Expand the distribution as
\begin{equation*}
\begin{aligned}
&{\rm P}(\bm{m}=\bm{0}|x,y,\bm{z_1},\bm{z_2})={\rm P}(m_Y=0|x,y,\bm{z_1},\bm{z_2},m_X=m_{\bm{Z_2}}=0)\times\\ 
&{\rm P}(m_X=0|x,y,\bm{z_1},\bm{z_2},m_{\bm{Z_2}}=0){\rm P}(m_{\bm{Z_2}}=0|x,y,\bm{z_1},\bm{z_2}).    
\end{aligned}    
\end{equation*}
We now show each factor is recoverable in m-DAG C. First, given that $M_Y\ind Y|X,\bm{Z_1},\bm{Z_2},M_X,M_{\bm{Z_2}}$ in $G$, by Rule-1,
\begin{equation*}
    {\rm P}(m_Y=0|x,y,\bm{z_1},m_X=m_{\bm{Z_2}}=0)={\rm P}(m_Y=0|x,\bm{z_1},\bm{z_2},m_X=m_{\bm{Z_2}}=0).
\end{equation*}
Second, given that $M_X\ind M_Y|X,\bm{Z_1},\bm{Z_2},Y,M_{\bm{Z_2}}$ in $G$, and $M_X\ind X|\bm{Z_1},\bm{Z_2},Y,M_{\bm{Z_2}},M_Y$ in $G$,
\begin{equation*}
\begin{aligned}
    {\rm P}(m_X=0|x,y,\bm{z_1},\bm{z_2},m_{\bm{Z_2}}=0)&={\rm P}(m_X=0|x,y,\bm{z_1},\bm{z_2},m_Y=m_{\bm{Z_2}}=0)\\
    &={\rm P}(m_X=0|y,\bm{z_1},\bm{z_2},m_Y=m_{\bm{Z_2}}=0)
\end{aligned}    
\end{equation*}
is recoverable by Rule-1. Third, given that $M_{\bm{Z_2}}\ind(M_X,M_Y)|X,\bm{Z_1},\bm{Z_2},Y$ in $G$, and $M_{\bm{Z_2}}\ind \bm{Z_2}|X,\bm{Z_1},Y,M_X,M_Y$ in $G$,
\begin{equation*}
\begin{aligned}
    {\rm P}(m_{\bm{Z_2}}=0|x,y,\bm{z_1},\bm{z_2})&={\rm P}(m_{\bm{Z_2}}=0|x,y,\bm{z_1},\bm{z_2},m_X=m_Y=0)\\
    &={\rm P}(m_{\bm{Z_2}}=0|x,y,\bm{z_1},m_X=m_Y=0)
\end{aligned}    
\end{equation*}
is recoverable by Rule-1. Hence, ${\rm P}(\bm{m}=\bm{0}|x,y,\bm{z_1},\bm{z_2})$ is recoverable and given by
\begin{equation*}
\begin{aligned}
    &{\rm P}(\bm{m}=\bm{0}|x,y,\bm{z_1},\bm{z_2})={\rm P}(m_Y=0|x,\bm{z_1},\bm{z_2},m_X=m_{\bm{Z_2}}=0)\times\\
    &{\rm P}(m_X=0|y,\bm{z_1},\bm{z_2},m_Y=m_{\bm{Z_2}}=0){\rm P}(m_{\bm{Z_2}}=0|x,y,\bm{z_1},m_X=m_Y=0).
\end{aligned}    
\end{equation*}
\end{proof}

\begin{corollary}\label{cor.1}
In m-DAG C, ${\rm P}(\bm{m}=\bm{0}|x,\bm{z_1},\bm{z_2})$ is recoverable.
\end{corollary}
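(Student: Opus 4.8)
The plan is to obtain ${\rm P}(\bm{m}=\bm{0}|x,\bm{z_1},\bm{z_2})$ by ``dropping'' the conditioning on the outcome $y$ from the quantity ${\rm P}(\bm{m}=\bm{0}|x,y,\bm{z_1},\bm{z_2})$, which Lemma \ref{lemma.1} has already shown to be recoverable. The obvious route --- writing ${\rm P}(\bm{m}=\bm{0}|x,\bm{z_1},\bm{z_2})=\sum_y {\rm P}(\bm{m}=\bm{0}|x,y,\bm{z_1},\bm{z_2}){\rm P}(y|x,\bm{z_1},\bm{z_2})$ --- is not yet available, because the recoverability of ${\rm P}(y|x,\bm{z_1},\bm{z_2})$ has not been established (indeed $Y$ is incomplete, and this is precisely the kind of quantity later corollaries are meant to deliver). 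So instead I would marginalise over $y$ using the algebraic identity (\ref{A.eq:2}), which is exactly the device used to eliminate $x$ in the m-DAG B proof.

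Concretely, applying (\ref{A.eq:2}) with $A=\{\bm{m}=\bm{0}\}$, the conditioning block $B=(x,\bm{z_1},\bm{z_2})$, and the summed-out variable $C=y$ gives
\begin{equation*}
{\rm P}(\bm{m}=\bm{0}|x,\bm{z_1},\bm{z_2})=\frac{{\rm P}(\bm{m}=\bm{0},x,\bm{z_1},\bm{z_2})}{\sum_{y}\dfrac{{\rm P}(\bm{m}=\bm{0},x,y,\bm{z_1},\bm{z_2})}{{\rm P}(\bm{m}=\bm{0}|x,y,\bm{z_1},\bm{z_2})}}.
\end{equation*}
I would then verify that every factor on the right is recoverable. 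The two joint probabilities ${\rm P}(\bm{m}=\bm{0},x,\bm{z_1},\bm{z_2})$ and ${\rm P}(\bm{m}=\bm{0},x,y,\bm{z_1},\bm{z_2})$ are complete-case probabilities: since $\bm{m}=\bm{0}$ forces $m_X=m_Y=m_{\bm{Z_2}}=0$, every substantive variable appearing in them (namely $x$, $\bm{z_2}$ and, in the second, $y$, with $\bm{z_1}$ always observed) occurs jointly with its own missingness indicator equal to zero, so each is a function of ${\rm P}(O)$ and hence recoverable. The remaining factor ${\rm P}(\bm{m}=\bm{0}|x,y,\bm{z_1},\bm{z_2})$ is recoverable by Lemma \ref{lemma.1}.

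The one point requiring care is the validity of the identity itself: (\ref{A.eq:2}) divides by ${\rm P}(\bm{m}=\bm{0}|x,y,\bm{z_1},\bm{z_2})$, so I must ensure this term is strictly positive for each $y$ in the summation. This is guaranteed by the positivity-over-complete-cases requirement built into the definition of recoverability (Definition \ref{rec}), under which ${\rm P}(D)$ is strictly positive over complete cases and hence ${\rm P}(\bm{m}=\bm{0}|x,y,\bm{z_1},\bm{z_2})>0$. With positivity secured, the displayed expression writes ${\rm P}(\bm{m}=\bm{0}|x,\bm{z_1},\bm{z_2})$ entirely in terms of recoverable quantities, completing the argument. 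I do not anticipate a genuine obstacle here: the substantive work was already carried out in Lemma \ref{lemma.1}, and this corollary amounts to a single marginalisation step, so the only thing to remain vigilant about is invoking positivity so that the ratio form of (\ref{A.eq:2}) is well defined.
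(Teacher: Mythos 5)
Your proposal is correct and follows essentially the same route as the paper: the paper likewise applies identity (\ref{A.eq:2}) with $y$ as the marginalised variable to write ${\rm P}(\bm{m}=\bm{0}|x,\bm{z_1},\bm{z_2})$ as a ratio of complete-case joint probabilities and the conditional ${\rm P}(\bm{m}=\bm{0}|x,y,\bm{z_1},\bm{z_2})$ from Lemma \ref{lemma.1}. Your explicit check of positivity over complete cases is a sensible addition but not a departure from the paper's argument.
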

\begin{proof}
By equation (\ref{A.eq:2}), the distribution ${\rm P}(\bm{m}=\bm{0}|x,\bm{z_1},\bm{z_2})$ can be rewritten as 
\begin{equation*}
    {\rm P}(\bm{m}=\bm{0}|x,\bm{z_1},\bm{z_2})=\frac{{\rm P}(\bm{m}=\bm{0},x,\bm{z_1},\bm{z_2})}{\sum_{y'}\frac{{\rm P}(\bm{m}=\bm{0},x,y',\bm{z_1},\bm{z_2})}{{\rm P}(\bm{m}=\bm{0}|x,y',\bm{z_1},\bm{z_2})}},
\end{equation*}
where every factor is recoverable by Lemma \ref{lemma.1}.
\end{proof}
\begin{corollary}\label{cor.2}
In m-DAG C, ${\rm P}(\bm{m}=\bm{0}|\bm{z_1},\bm{z_2})$ is recoverable.
\end{corollary}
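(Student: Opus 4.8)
The plan is to mirror exactly the argument already used for Corollary \ref{cor.1}, except that now I would marginalise over the exposure $x$ rather than over the outcome $y$. The target quantity ${\rm P}(\bm{m}=\bm{0}|\bm{z_1},\bm{z_2})$ conditions only on the confounders, so the idea is to re-express it via the standard probability identity so that every surviving factor is either a joint distribution over complete cases (hence directly observable) or a conditional we have already established to be recoverable.

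First I would apply the probability result (\ref{A.eq:2}), taking the summed-over variable to be $x$, which gives
\begin{equation*}
{\rm P}(\bm{m}=\bm{0}|\bm{z_1},\bm{z_2})=\frac{{\rm P}(\bm{m}=\bm{0},\bm{z_1},\bm{z_2})}{\sum_{x'}\frac{{\rm P}(\bm{m}=\bm{0},x',\bm{z_1},\bm{z_2})}{{\rm P}(\bm{m}=\bm{0}|x',\bm{z_1},\bm{z_2})}}.
\end{equation*}
Then I would check recoverability factor by factor. The terms ${\rm P}(\bm{m}=\bm{0},\bm{z_1},\bm{z_2})$ and ${\rm P}(\bm{m}=\bm{0},x',\bm{z_1},\bm{z_2})$ are joint distributions in which every partially observed variable appears jointly with $\bm{m}=\bm{0}$, i.e. restricted to records in which it is observed, so they are recoverable by the condition of Theorem (conditions for recoverability). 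The remaining conditional ${\rm P}(\bm{m}=\bm{0}|x',\bm{z_1},\bm{z_2})$ is recoverable by Corollary \ref{cor.1}. Since every factor on the right-hand side is recoverable, the left-hand side is too, which is the claim.

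Because this is a direct re-use of the identity and of the result already proved in Corollary \ref{cor.1}, I do not expect a genuine obstacle here; no further $d$-separation or do-calculus reasoning is required, as all the graphical work was discharged in Lemma \ref{lemma.1} and inherited through Corollary \ref{cor.1}. The only points needing care are bookkeeping: confirming that conditioning on $\bm{m}=\bm{0}$ really does render the joint terms observable (each incomplete variable is jointly conditioned on its own indicator being zero), and noting that the denominator is well defined, which follows from the strict positivity over complete cases assumed in Definition \ref{rec}. This corollary then serves as the stepping stone needed to establish the recoverability of ${\rm P}(x|\bm{z_1},\bm{z_2})$ and ${\rm P}(\bm{z_1},\bm{z_2})$ that appear in the final ACE expression for m-DAG C.
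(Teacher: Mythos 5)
Your proposal is correct and matches the paper's own proof essentially verbatim: both apply equation (\ref{A.eq:2}) with $x$ as the marginalised variable and invoke Corollary \ref{cor.1} for the conditional in the denominator. The extra remarks on observability of the joint terms and positivity are sound bookkeeping that the paper leaves implicit.
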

\begin{proof}
By equation (\ref{A.eq:2}), the distribution ${\rm P}(\bm{m}=\bm{0}|\bm{z_1},\bm{z_2})$ can be rewritten as 
\begin{equation*}
    {\rm P}(\bm{m}=\bm{0}|\bm{z_1},\bm{z_2})=\frac{{\rm P}(\bm{m}=\bm{0},\bm{z_1},\bm{z_2})}{\sum_{x'}\frac{{\rm P}(\bm{m}=\bm{0},x',\bm{z_1},\bm{z_2})}{{\rm P}(\bm{m}=\bm{0}|x',\bm{z_1},\bm{z_2})}},
\end{equation*}
where every factor is recoverable by \ref{cor.1}.
\end{proof}
\begin{corollary}\label{cor.3}
In m-DAG C, ${\rm P}(x|\bm{z_1},\bm{z_2})$ is recoverable.
\end{corollary}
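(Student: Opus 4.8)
The plan is to reduce ${\rm P}(x|\bm{z_1},\bm{z_2})$ to quantities whose recoverability has already been established in Lemma \ref{lemma.1} and Corollaries \ref{cor.1} and \ref{cor.2}, rather than running a fresh sequence of do-calculus manipulations. The natural device is the probability identity (\ref{A.eq:1}), which pivots any conditional distribution through an auxiliary event. Taking $A=X$, $B=(\bm{Z_1},\bm{Z_2})$ and $C=\{\bm{M}=\bm{0}\}$ in that identity gives
\begin{equation*}
{\rm P}(x|\bm{z_1},\bm{z_2})=\frac{{\rm P}(x|\bm{z_1},\bm{z_2},\bm{m}=\bm{0}){\rm P}(\bm{m}=\bm{0}|\bm{z_1},\bm{z_2})}{{\rm P}(\bm{m}=\bm{0}|x,\bm{z_1},\bm{z_2})},
\end{equation*}
where the denominator is strictly positive by the positivity-over-complete-cases requirement in Definition \ref{rec}.

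I would then check each factor in turn. The leading numerator term ${\rm P}(x|\bm{z_1},\bm{z_2},\bm{m}=\bm{0})$ is recoverable directly: conditioning on $\bm{m}=\bm{0}$ forces $m_X=0$ and $m_{\bm{Z_2}}=0$, so both $X$ and $\bm{Z_2}$ appear only in a distribution that conditions on their own missingness indicators being zero, and the Theorem giving conditions for recoverability applies at once. The second numerator term ${\rm P}(\bm{m}=\bm{0}|\bm{z_1},\bm{z_2})$ is recoverable by Corollary \ref{cor.2}, and the denominator ${\rm P}(\bm{m}=\bm{0}|x,\bm{z_1},\bm{z_2})$ is recoverable by Corollary \ref{cor.1}. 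As ${\rm P}(x|\bm{z_1},\bm{z_2})$ is therefore a product and ratio of recoverable, strictly positive quantities, it is itself recoverable, supplying the last factor needed for the recoverability of ${\rm P}(y|\hat{x})$ in m-DAG C.

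The substantive content lies not in the algebra but in the choice of pivot: the argument works only because I condition through $\bm{M}=\bm{0}$, the very event for which the accumulated results furnish recoverability of every induced factor. The main obstacle I would guard against is a careless choice of $C$ in (\ref{A.eq:1}) that produces a factor in which an incomplete variable is conditioned upon without its missingness indicator being fixed to zero; by the non-recoverability criteria of Corollary \ref{non-rec-C}, such a factor could itself be non-recoverable in m-DAG C, which would break the decomposition even though the target ${\rm P}(x|\bm{z_1},\bm{z_2})$ is recoverable.
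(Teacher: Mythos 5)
Your proposal is correct and follows essentially the same route as the paper: both apply identity (\ref{A.eq:1}) with the pivot $\bm{M}=\bm{0}$ to obtain the identical three-factor expression, then invoke Corollaries \ref{cor.1} and \ref{cor.2} (the paper cites only \ref{cor.2}, but the denominator indeed rests on \ref{cor.1} as you say) together with the direct recoverability of ${\rm P}(x|\bm{z_1},\bm{z_2},\bm{m}=\bm{0})$. Your added remarks on positivity of the denominator and on why the pivot must be $\bm{M}=\bm{0}$ are accurate but not needed beyond what the paper states.
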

\begin{proof}
By equation (\ref{A.eq:1}), the distribution ${\rm P}(x|\bm{z_1},\bm{z_2})$ can be rewritten as 
\begin{equation*}
    {\rm P}(x|\bm{z_1},\bm{z_2})=\frac{{\rm P}(x|\bm{z_1},\bm{z_2},\bm{m}=\bm{0}){\rm P}(\bm{m}=\bm{0}|\bm{z_1},\bm{z_2})}{{\rm P}(\bm{m}=\bm{0}|x,\bm{z_1},\bm{z_2})},
\end{equation*}
where every factor is recoverable by Corollary \ref{cor.2}.
\end{proof}

\subsection*{A.6.\enspace Recoverability for ACE in m-DAG D}
The ACE is not recoverable in m-DAG D shown in Figure \ref{m-DAG.D}.

\begin{proof}
We prove the non-recoverability of the ACE in m-DAG D by first proving the non-recoverability of the ACE in simplified m-DAG D' shown in Figure \ref{m-DAG.D'}, where $Z_2$ is a binary confounder that causes its missingness. 

In m-DAG D', the marginal distribution of $Z_2$ is not recoverable by Theorem \ref{non-rec-V}, as $Z_2$ and its missingness indicator $M_{Z_2}$ are neighbours. Whereas, the conditional distribution of outcome given exposure and confounder is recoverable in m-DAG D', i.e. ${\rm P}(y|x,z_2)={\rm P}(y|x,z_2,\bm{m}=\bm{0})$. Using a derivation similar to that after expression (\ref{A.B.1}), the marginal distribution of the potential outcome can be expressed as: 
\begin{equation*}
{\rm P}(y|\hat{x})=\sum_z {\rm P}(y|x,z_2){\rm P}(z_2)=[{\rm P}(y|x,z_2=0)-{\rm P}(y|x,z_2=1)]{\rm P}(z_2=0)+{\rm P}(y|x,z_2=1),
\end{equation*}
where ${\rm P}(y|x,z_2=0)-{\rm P}(y|x,z_2=1)\neq 0$ given that $Y \not\!\perp\!\!\!\perp Z_2|X$ in m-DAG D'. Thus, the first term is non-recoverable given the second statement in Corollary \ref{non-rec} and ${\rm P}(y|\hat{x})$ is also non-recoverable given the first statement in Corollary \ref{non-rec}. Therefore, the ACE is non-recoverable in m-DAG D' and m-DAG D. The latter is given by Lemma \ref{G.G'}.
\end{proof}

\subsection*{A.7.\enspace Recoverability for ACE in m-DAG G}
The ACE is conjectured not recoverable in m-DAG G shown in Figure \ref{m-DAG.G}.

\begin{proof}
Expand the potential outcome as
\begin{equation*}\tag{\ref{A.B.1}}
    {\rm P}(y|\hat{x})=\sum_{\bm{z_1}}\sum_{\bm{z_2}}{\rm P}(y|\hat{x},\bm{z_1},\bm{z_2}){\rm P}(\bm{z_1},\bm{z_2}|\hat{x}).
\end{equation*}
Given that $Y\ind M_X|X,\bm{Z_1},\bm{Z_2},M_Y $ in $G_{\overline{X}}$ and $Y\ind M_{\bm{Z_2}}|X,\bm{Z_1},\bm{Z_2},M_Y,M_X$ in $G_{\overline{X}}$, by Rule-1
\begin{equation*}
    {\rm P}(y|\hat{x},\bm{z_1},\bm{z_2})={\rm P}(y|\hat{x},\bm{z_1},\bm{z_2},m_X=0,m_{\bm{Z_2}}=0).
\end{equation*}
However, this probability is not recoverable by Corollary \ref{non-rec-C}, as $Y$ and its missingness indicator $M_{Y}$ are neighbours. The second term in (\ref{A.B.1}) is recoverable (proof as for m-DAG B), so from first statement in Corollary \ref{non-rec}, each term in the sum is non-recoverable, that is, (\ref{A.B.1}) is a sum of non-recoverable terms. Therefore we conjecture the ACE is not recoverable in m-DAG G.
\end{proof}

\subsection*{A.8.\enspace Recoverability for ACE in other m-DAGs}
The ACE is recoverable in m-DAG A and the marginal distribution of potential outcome is given by ${\rm P}(y|\hat{x})=\sum_{\bm{z_1}}\sum_{\bm{z_2}}{\rm P}(y|x,\bm{z_1},\bm{z_2},\bm{m}=\bm{0}){\rm P}(\bm{z_1}){\rm P}(\bm{z_2}|\bm{z_1},m_{\bm{Z_2}}=0)$. The ACE is not recoverable in m-DAGs E, F, I and J given the non-recoverability in m-DAG D based on Lemma \ref{G.G'}, since they result from adding edges to m-DAG D. If our conjecture on the non-recoverability of the ACE in m-DAG G stands, we can conclude that the ACE is not recoverable in m-DAGs H and I for the same reason. 

\subsection*{A.9.\enspace Recoverability for ACE in expanded version of m-DAG D}
The ACE is recoverable in m-DAG D'' shown in Figure \ref{m-DAG.D''}.

\begin{proof}
The marginal distribution of the potential outcome ${\rm P}(y|\hat{x})$ is given by
\begin{equation}\label{A.D.1}
    {\rm P}(y|\hat{x})=\sum_{\bm{z_1}}\sum_{\bm{z_2}}\sum_{\bm{z_3}}{\rm P}(y|\hat{x},\bm{z_1},\bm{z_2},\bm{z_3}){\rm P}(\bm{z_1},\bm{z_2},\bm{z_3}|\hat{x}).
\end{equation}
Following Rule-1, we can condition the first term condition on all missingness indicators. We use $\bm{m}=\bm{0}$ as a shorthand of $\{m_X=m_Y=m_{\bm{Z_2}}=m_{\bm{Z_3}}=0\}$ in this section.
\begin{equation*}
\begin{aligned}
    {\rm P}(y|\hat{x},\bm{z_1},\bm{z_2},\bm{z_3})
    &={\rm P}(y|\hat{x},\bm{z_1},\bm{z_2},\bm{z_3},m_X=0,m_Y=0,m_{\bm{Z_2}}=0,m_{\bm{Z_3}}=0)\\
    &={\rm P}(y|\hat{x},\bm{z_1},\bm{z_2},\bm{z_3},\bm{m}=\bm{0}),
\end{aligned}
\end{equation*}
as $Y\ind M_Y|X,\bm{Z_1},\bm{Z_2},\bm{Z_3}$ in $G_{\overline{X}}$, $Y\ind M_X|X,\bm{Z_1},\bm{Z_2},\bm{Z_3},M_Y$ in $G_{\overline{X}}$, and $Y\ind M_{\bm{Z_2}}|X,\bm{Z_1},\bm{Z_2},\bm{Z_3},M_Y,M_X$ in $G_{\overline{X}}$. Applying Rule-2, 
\begin{equation*}
    {\rm P}(y|\hat{x},\bm{z_1},\bm{z_2},\bm{z_3},\bm{m}=\bm{0})={\rm P}(y|x,\bm{z_1},\bm{z_2},\bm{z_3},\bm{m}=\bm{0}),
\end{equation*}
since $Y\ind X|\bm{Z_1},\bm{Z_2},\bm{Z_3},M_Y,M_X,M_{\bm{Z_2}},M_{\bm{Z_3}}$ in $G_{\underline{X}}$. Hence, the first term of (\ref{A.D.1}) is recoverable.

The second term of (\ref{A.D.1}) can be simplified as
\begin{equation*}
    {\rm P}(\bm{z_1},\bm{z_2},\bm{z_3}|\hat{x})={\rm P}(\bm{z_1},\bm{z_2},\bm{z_3})={\rm P}(\bm{z_1},\bm{z_3}){\rm P}(\bm{z_2}|\bm{z_1},\bm{z_3}),
\end{equation*}
by Rule-3, given $(\bm{Z_1},\bm{Z_2},\bm{Z_3})\ind X$ in $G_{\overline{X}}$. Given $\bm{Z_2}\ind M_{\bm{Z_2}}|\bm{Z_1},\bm{Z_3}$ in $G$, 
\begin{equation*}
{\rm P}(\bm{z_2}|\bm{z_1},\bm{z_3})={\rm P}(\bm{z_2}|\bm{z_1},\bm{z_3},m_{\bm{Z_2}}=0).
\end{equation*}
Then we can express ${\rm P}(\bm{z_2}|\bm{z_1},\bm{z_3})$ with the use of equation (\ref{A.eq:1}) as
\begin{equation*}
    {\rm P}(\bm{z_2}|\bm{z_1},\bm{z_3},m_{\bm{Z_2}}=0)=\frac{{\rm P}(\bm{z_2}|\bm{z_1},\bm{z_3},m_{\bm{Z_2}}=m_{\bm{Z_3}}=0){\rm P}(m_{\bm{Z_3}}=0|\bm{z_1},\bm{z_3},m_{\bm{Z_2}}=0)}{{\rm P}(m_{\bm{Z_3}}=0|\bm{z_1},\bm{z_2},\bm{z_3},m_{\bm{Z_2}}=0)},
\end{equation*}
where ${\rm P}(\bm{z_2}|\bm{z_1},\bm{z_3},m_{\bm{Z_2}}=m_{\bm{Z_3}}=0)$ is recoverable. Given $M_{\bm{Z_3}} \ind M_{\bm{Z_2}}|\bm{Z_1},\bm{Z_3}$ in $G$, the second term of numerators can be expressed as
\begin{equation*}
{\rm P}(m_{\bm{Z_3}}=0|\bm{z_1},\bm{z_3},m_{\bm{Z_2}}=0)={\rm P}(m_{\bm{Z_3}}=0|\bm{z_1},\bm{z_3})=\frac{{\rm P}(m_{\bm{Z_3}}=0,\bm{z_1},\bm{z_3})}{{\rm P}(\bm{z_1},\bm{z_3})}.
\end{equation*}
Given $M_{\bm{Z_3}} \ind \bm{Z_3}|\bm{Z_1},\bm{Z_2},M_{\bm{Z_2}}$ in $G$, rewrite the denominator as
\begin{equation*}
{\rm P}(m_{\bm{Z_3}}=0|\bm{z_1},\bm{z_2},\bm{z_3},m_{\bm{Z_2}}=0)={\rm P}(m_{\bm{Z_3}}=0|\bm{z_1},\bm{z_2},m_{\bm{Z_2}}=0).
\end{equation*}
Therefore, the second term of (\ref{A.D.1}) can be simplified as
\begin{equation*}
    {\rm P}(\bm{z_1},\bm{z_3}){\rm P}(\bm{z_2}|\bm{z_1},\bm{z_3})=\frac{{\rm P}(\bm{z_2}|\bm{z_1},\bm{z_3},m_{\bm{Z_2}}=m_{\bm{Z_3}}=0){\rm P}(m_{\bm{Z_3}}=0,\bm{z_1},\bm{z_3})}{{\rm P}(m_{\bm{Z_3}}=0|\bm{z_1},\bm{z_2},m_{\bm{Z_2}}=0)},
\end{equation*}
where each term is recoverable. Hence, the marginal distribution of the potential outcome ${\rm P}(y|\hat{x})$ is recoverable in m-DAG D'', by the following expression
\begin{equation*}
{\rm P}(y|\hat{x})=\sum_{\bm{z_1}}\sum_{\bm{z_2}}\sum_{\bm{z_3}}{\rm P}(y|x,\bm{z_1},\bm{z_2},\bm{z_3},\bm{m}=\bm{0})\frac{{\rm P}(\bm{z_2}|\bm{z_1},\bm{z_3},m_{\bm{Z_2}}=m_{\bm{Z_3}}=0){\rm P}(m_{\bm{Z_3}}=0,\bm{z_1},\bm{z_3})}{{\rm P}(m_{\bm{Z_3}}=0|\bm{z_1},\bm{z_2},m_{\bm{Z_2}}=0)}.
\end{equation*}
\end{proof}

\begin{sidewaystable}[h!]
\caption{Descriptive statistics for the analysis variables for the case study, using data from Victorian Adolescent Health Cohort Study ($n=961$)}
\label{tab1}
\centering
\begin{tabular}{cclccc}
\hline
\multicolumn{3}{c}{}                                        & \multicolumn{2}{c}{Stratified by   exposure\textsuperscript{a}} &              \\
Role       & Label & Variable                               & Unexposed              & Exposed               & Missing (\%) \\ 
\hline
Exposure   & $X$   & Cannabis use in adolescence, Yes                      & 603 (62.7)             & 84 (8.7)              & 28.5            \\
Outcome    & $Y$   & Adulthood mental health score \textsuperscript{b}   & -0.11 (1.00)           & 0.48 (0.87)           & 10.4         \\
Confounder & $C_1$  & Parental education (failure to complete high-school), Yes & 206 (34.2)             & 35 (41.7)             & 0            \\
Confounder & $C_2$  & Parental divorce or separation, Yes                  & 94 (15.6)             & 38 (45.2)             & 0            \\
Confounder & $C_3$   & Antisocial behaviour in adolescence, Yes              & 42 (7.0)               & 31 (36.9)             & 0            \\
Confounder & $C_4$  & Adolescent depression \& anxiety, Yes   & 297 (49.3)             & 62 (81.6)             & 12.4         \\
Confounder & $C_5$   & Alcohol use in adolescence, Yes                       & 155 (25.7)             & 65 (87.8)             & 19.3         \\
Auxiliary  & $A$   & Participant’s age at wave two \textsuperscript{b}       & -0.10 (0.81)           & 0.14 (0.94)           & 8.9          \\ 
\hline
\end{tabular}

\raggedright
a. For incomplete variables, the descriptive statistics are obtained from the records with available data on the given variable.

b. In standard deviation units, standardised to the overall sample.
\end{sidewaystable}

\begin{sidewaystable}[h!]
\caption{The recoverability results for the ACE and the expression for the marginal distribution of the potential outcomes in terms of observable data}
\label{Recoverability}
\centering
\begin{tabular}{ccc}
\hline
m-DAG & Expression of potential outcome\textsuperscript{a, b, c, d} & Recoverability of ACE \\ \hline
A & $\sum_{\bm{z_1},\bm{z_2}}\rm{P}(y|x,\bm{z_1},\bm{z_2},\bm{m}=\bm{0})\rm{P}(\bm{z_1})\rm{P}(\bm{z_2}|\bm{z_1},m_{\bm{Z_2}}=0)$ & Yes\\
B &  $\sum_{\bm{z_1},\bm{z_2}}\rm{P}(y|x,\bm{z_1},\bm{z_2},\bm{m}=\bm{0}) \sum_{x'}\frac{\rm{P}(m_X=0,m_{\bm{Z_2}}=0,x',\bm{z_1},\bm{z_2})}{\rm{P}(m_{\bm{Z_2}}=0|x',\bm{z_1},m_X=0)\rm{P}(m_X=0|\bm{z_1},\bm{z_2},m_{\bm{Z_2}}=0)}$ & Yes \\
C &  $\sum_{\bm{z_1},\bm{z_2}}\frac{\rm{P}(y|x,\bm{z_1},\bm{z_2},\bm{m}=\bm{0})\rm{P}(x,\bm{z_1},\bm{z_2},m_X=m_{\bm{Z_2}}=0)}{\rm{P}(m_{\bm{Z_2}}=0|x,y,\bm{z_1},m_X=m_Y=0)\rm{P}(m_X=0|y,\bm{z_1},\bm{z_2},m_Y=m_{\bm{Z_2}}=0)} \frac{\sum_{x',y''}\frac{\rm{P}(\bm{m}=\bm{0},x,y'',\bm{z_1},\bm{z_2})} {\rm{P}(\bm{m}=\bm{0}|x',y'',\bm{z_1},\bm{z_2})}} {\sum_{y'}\frac{\rm{P}(\bm{m}=\bm{0},x,y',\bm{z_1},\bm{z_2})} {\rm{P}(\bm{m}=\bm{0}|x,y',\bm{z_1},\bm{z_2})}}$ & Yes \\
D &  & No \\
E &  & No \\
F &  & No \\
G &  & Conjecture no \\
H &  & Conjecture no \\
I &  & No \\
J &  & No \\ \hline
\end{tabular}

\raggedright
Abbreviation: m-DAG, missingness directed acyclic graph; ACE, average causal effect.

a. Proofs are provided in Supplemental Material.

b. $\bm{m}=\bm{0}$ is a shorthand of $\{ m_X=m_Y=m_{\bm{Z_2}}=0 \}$. 

c. $\rm{P}(\bm{m}=\bm{0}|x,y,\bm{z_1},\bm{z_2})=\rm{P}(m_Y=0|x,\bm{z_1},\bm{z_2},m_X=m_{\bm{Z_2}}=0)\rm{P}(m_X=0|y,\bm{z_1},\bm{z_2},m_Y=m_{\bm{Z_2}}=0)\rm{P}(m_{\bm{Z_2}}=0|x,y,\bm{z_1},m_X=m_Y=0)$.

d. A blank space is left where the distribution is not recoverable or it has not been established.
\end{sidewaystable}

\begin{table}[p]
\caption{The relative strength of exposure-confounder interactions ($\beta_7$ and $\beta_8$) compared with the exposure effect ($\beta_6$) in six outcome scenarios}
\label{outscen}
\centering
\begin{tabular}{ccccccc}
\hline
\multirow{2}{*}{The relative strength}     & \multicolumn{6}{c}{Outcome scenarios} \\
  & I   & II   & III   & IV    & V  & VI  \\ \hline
\begin{tabular}[c]{@{}c@{}}For complete confounder $C_3$,\\ the ratio of $\beta_7/\beta_6$\end{tabular}   & 0   & 0.5  & 0     & 0.5   & 3  & 0   \\
\begin{tabular}[c]{@{}c@{}}For incomplete confounder $C_4$,\\ the ratio of $\beta_8/\beta_6$\end{tabular} & 0   & 0    & -0.5  & -0.5  & 0  & -3  \\ \hline
\end{tabular}

\raggedright
a. The signs of the exposure-confounder interactions were kept as for the estimated regression coefficients in (\ref{outgen}) from the VAHCS data. 
\end{table}

\begin{table}[p]
\caption{Description and specification of missing data methods under comparison}
\label{methods}
\centering
\begin{tabular}{cm{0.8\linewidth}}
\hline
Label  & \multicolumn{1}{c}{Methods}                                                                                                                                                                                                                                             \\ \hline
CCA    & Complete-case analysis                                                                                                                                                                                                                                                  \\ \hline
MI-Sim & Simple MI: Imputation model uses all variables without any interactions.                                                                                                                                                                                                \\ \hline
MI-EO  & Exposure-outcome interaction MI: Imputation model for incomplete confounders added $XY$ interaction.                                                                                                                                                                    \\ \hline
MI-EI  & Exposure-incomplete confounder interaction MI: Imputation model for $C_4$ added $XC_5$ interaction, and for $C_5$ added $XC_4$ interaction. Imputation model for the outcome added both $XC_4$ and $XC_5$ interactions.                                                 \\ \hline
MI-EC  & Exposure-confounder interaction MI: Imputation model for incomplete confounders added interactions between exposure and other confounders. Imputation model for the outcome added all exposure-confounder interactions.                                                           \\ \hline
MI-Com & Approximately compatible MI: Imputation model and outcome model in g-computation were made to be approximately compatible by including in univariate imputation models all relevant two-way confounder-confounder, outcome-confounder and exposure-outcome interactions. \\ \hline
MI-SMC & Substantive model compatible FCS MI: Imputation model is compatible with the outcome model in g-computation.                                                                                                                 \\ \hline
\end{tabular}

\raggedright
a. All MI approaches were carried out by fully conditional specification  (FCS) approach.
\end{table}

\begin{table}[p]
\caption{Estimates of ACE obtained using missing data methods in the case study ($n=961$)}
\label{case}
    \centering
    \begin{tabular}{c|ccc}
\hline
Method & Estimate & Standard error & 95\% confidence interval \\ \hline
CCA    & 0.18 & 0.17 & -0.16, 0.50 \\
MI-Sim & 0.25 & 0.15 & -0.05, 0.54 \\  
MI-EO  & 0.22 & 0.15 & -0.08, 0.51 \\
MI-EI  & 0.25 & 0.15 & -0.04, 0.55 \\   
MI-EC  & 0.25 & 0.15 & -0.04, 0.53 \\   
MI-Com & 0.19 & 0.15 & -0.10, 0.49 \\  
MI-SMC & 0.22 & 0.15 & -0.07, 0.50 \\
\hline    
\end{tabular}
\end{table}

\begin{sidewaystable}[h!]
\caption{Assessment of the existence of an arrow from each incomplete variable to each missingness indicator in the example from Victorian Adolescent Health Cohort Study ($n=961$)}
\label{VAHCS.assumption}
\centering
\begin{tabular}{cc|ccc}
\hline
\multicolumn{2}{c|}{\multirow{2}{*}{}}                                                                                                                                         & \multicolumn{3}{c}{Arrows to:}                                                                                                                                                                                                                                                                                                                                                                                                                                                                                                                                                                                         \\ \cline{3-5} 
\multicolumn{2}{c|}{}                                                                                                                                                          & \multicolumn{1}{c|}{$M_{C_3},M_{C_4}$}                                                                                                                                                                                                      & \multicolumn{1}{c|}{$M_X$}                                                                                                                                                                                                           & $M_Y$                                                                                                                                                \\ \hline
\multicolumn{1}{c|}{\multirow{8}{*}{\parbox[t]{2mm}{\multirow{3}{*}{\rotatebox[origin=c]{90}{Arrow from:}}}}} & \begin{tabular}[c]{@{}c@{}} $C_3,C_4$ \\ \\ (Adolescent depression \\ \& anxiety and Alcohol \\ use at Wave 2-6) \\ \\ \end{tabular} & \multicolumn{1}{c|}{\begin{tabular}[c]{@{}c@{}}Likely\\  \\ Failure to return form can be due \\ to mental health issues\textsuperscript{a}. Failure \\ to answer alcohol use can be due \\ to stigma attached to it\textsuperscript{b-e}.\end{tabular}} & \multicolumn{1}{c|}{\begin{tabular}[c]{@{}c@{}}Likely\\  \\ Failure to return form can be due \\ to mental health issues or alcohol use\textsuperscript{a-e}. \\ \\ \\ \end{tabular}}                                                                   & \begin{tabular}[c]{@{}c@{}}Likely\\  \\ Failure to return form and attrition \\ by wave 7 can be due to mental \\ health issues\textsuperscript{a}. \\ \\ \end{tabular} \\ \cline{2-5} 
\multicolumn{1}{c|}{}                             & \begin{tabular}[c]{@{}c@{}} $X$ \\ \\ (Cannabis use at \\ Wave 2-6) \\ \\ \\ \end{tabular}                                               & \multicolumn{1}{c|}{\begin{tabular}[c]{@{}c@{}}Likely\\ \\ Failure to return form can be due \\ to cannabis use\textsuperscript{a,e}. \\ \\ \\ \end{tabular}}                                                                                          & \multicolumn{1}{c|}{\begin{tabular}[c]{@{}c@{}}Likely\\  \\ Failure to answer cannabis use can \\ be due to illegality and stigma \\ attached to it. Failure to return form \\ can be due to cannabis use\textsuperscript{a,e}. \end{tabular}} & \begin{tabular}[c]{@{}c@{}}Likely\\  \\ Failure to return form and attrition \\ by wave 7 can be due to cannabis \\ use\textsuperscript{d,e}. \\ \\ \end{tabular}       \\ \cline{2-5} 
\multicolumn{1}{c|}{}                             & \begin{tabular}[c]{@{}c@{}}$Y$\\ \\ (Adulthood mental \\ health score at Wave 7)\end{tabular}                             & \multicolumn{1}{c|}{\begin{tabular}[c]{@{}c@{}}Not likely\\  \\ Missingness in confounders preceded \\ outcome by at least six months.\end{tabular}}                                                                       & \multicolumn{1}{c|}{\begin{tabular}[c]{@{}c@{}}Not likely\\  \\ Missingness in exposure preceded \\ outcome by at least six months.\end{tabular}}                                                                              & \begin{tabular}[c]{@{}c@{}}Likely\\  \\ Failure to return form can be due to \\ mental health issues\textsuperscript{a}.\end{tabular}                         \\ \hline
\end{tabular}
\raggedright
a. \citep{cheung2017impact}\\
b. \citep{gorman2014assessing}\\
c. \citep{lemmens1988bias}\\
d. \citep{caetano2001non}\\
e. \citep{zhao2009non}
\end{sidewaystable}

\begin{figure}[h!]
\centering
\includegraphics[width=0.9\linewidth]{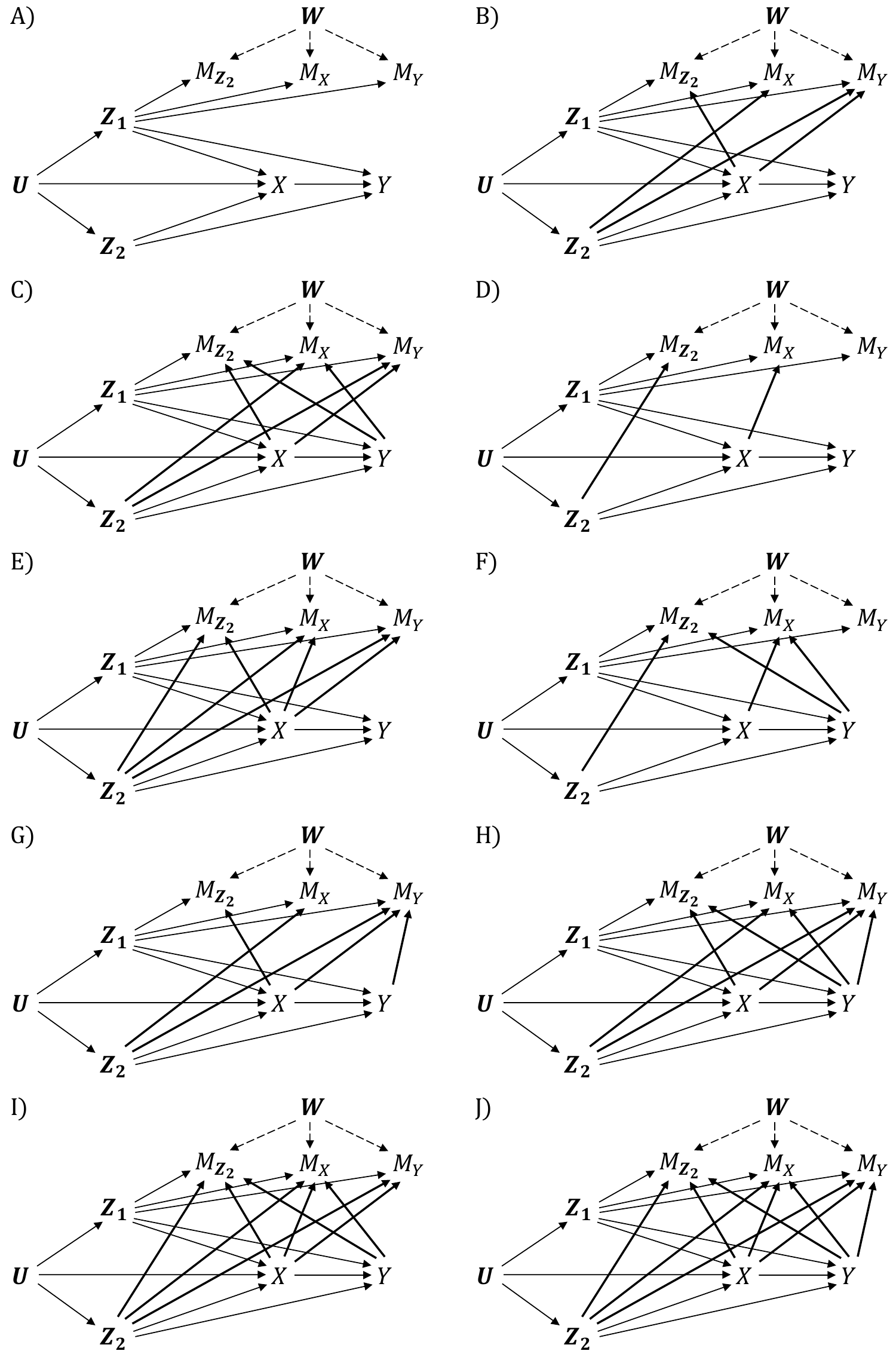}
\caption{Canonical missingness directed acyclic graphs (m-DAGs) for illustrating typical missingness mechanisms in point-exposure setting for epidemiological studies, adapted from \citep{moreno2018canonical}. The findings in Section \ref{the.res.section} of the manuscript assume the absence of the dashed arrows, whereas the findings of Section \ref{simu.res.section} assume these dashed arrows are present.}
\label{m-DAG}
\end{figure}

\begin{figure}[h!]
\centering
\includegraphics[width=\linewidth]{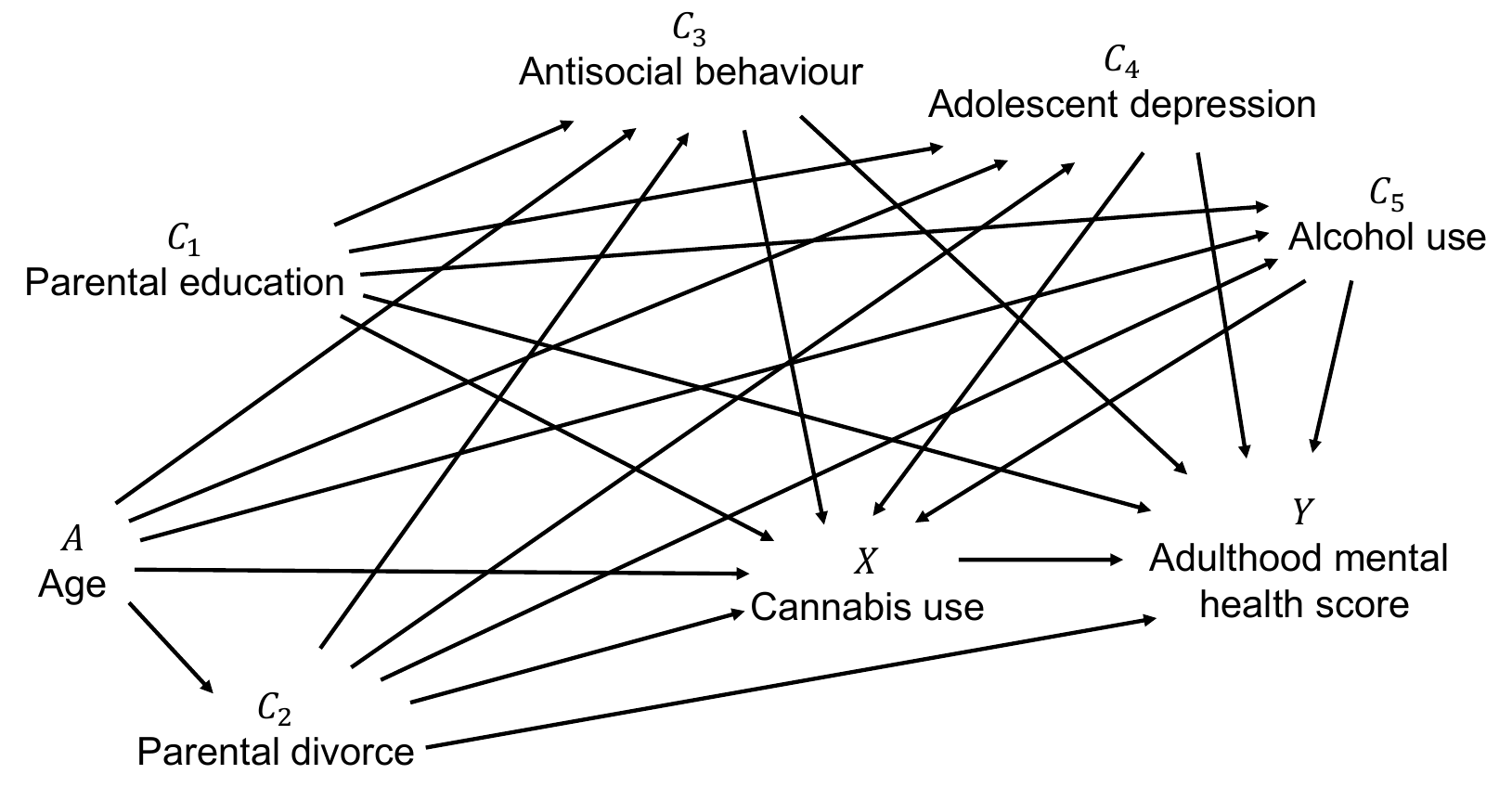}
\caption{Causal diagram guiding the complete data generation diagram in the simulation study.}
\label{simuDAG}
\end{figure}

\begin{figure}[h!]
\centerfloat
\includegraphics[width=\linewidth]{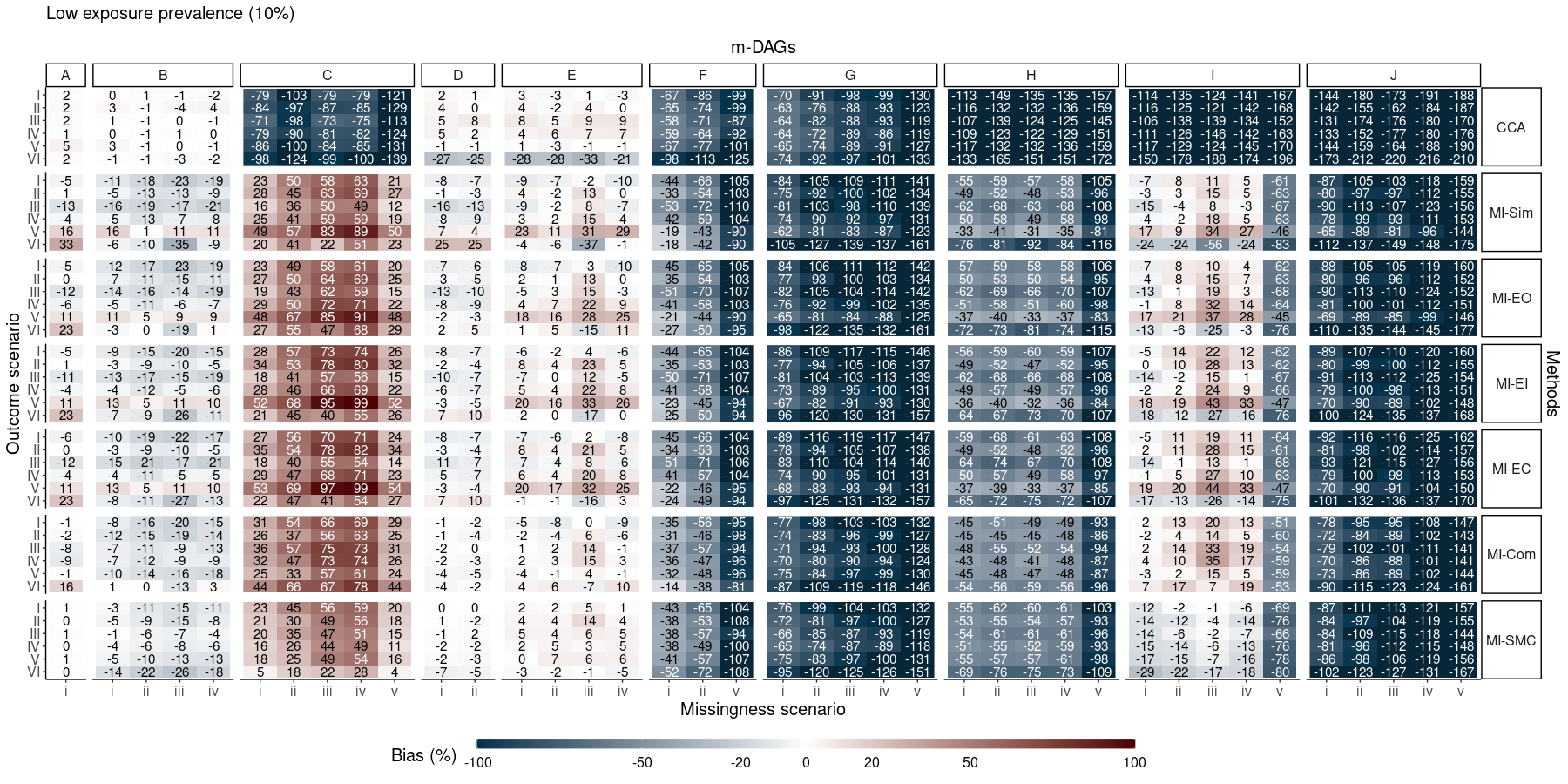}
\includegraphics[width=\linewidth]{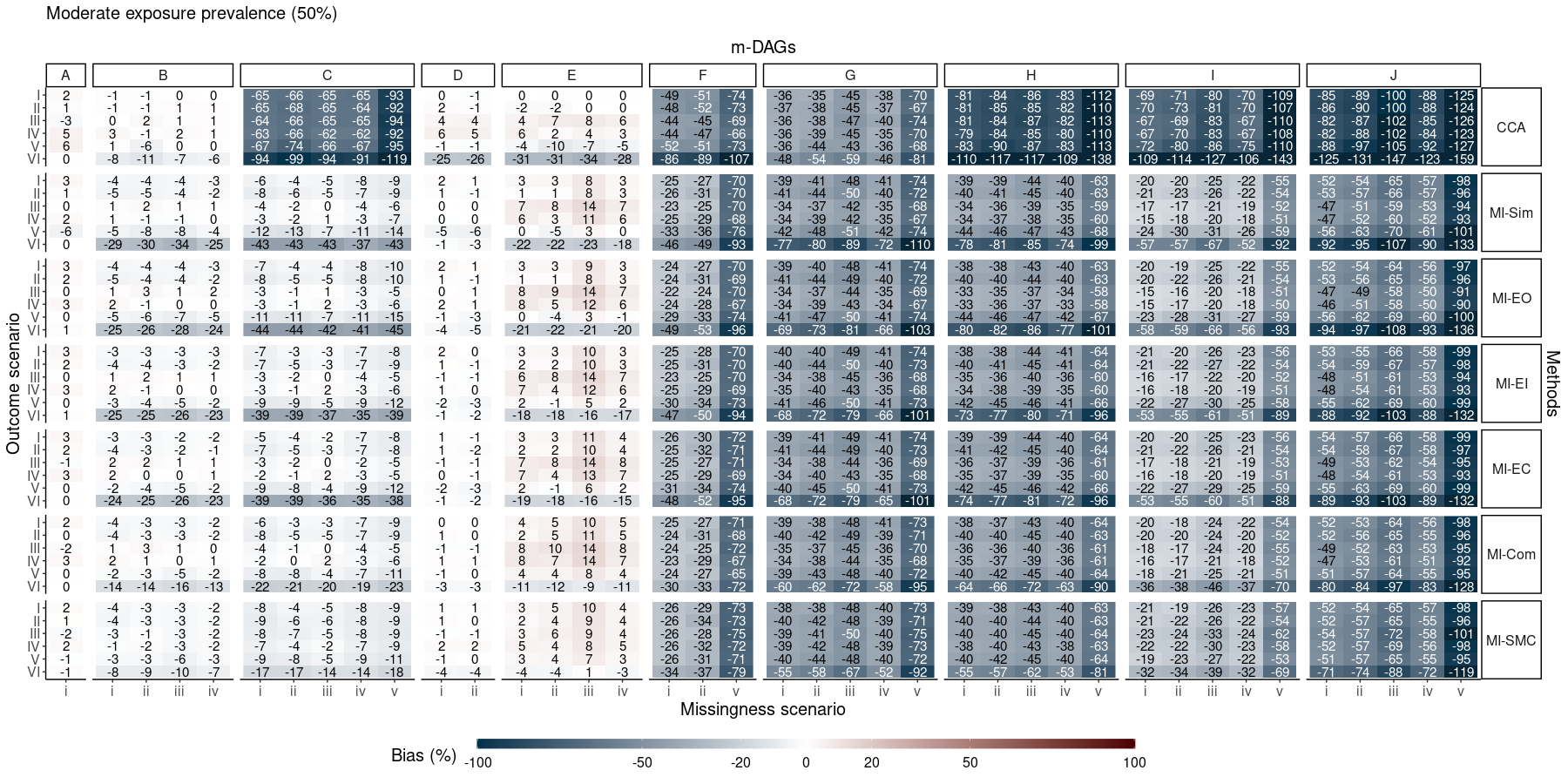}
\caption{Simulation study results: relative bias (\%) for missing data methods in canonical missingness directed acyclic graphs (m-DAGs) for low (top panel) and moderate (bottom panel) exposure prevalence scenarios, across a range of outcome and missingness scenarios.}
\label{RB}
\end{figure}

\begin{figure}[h!]
\centerfloat
\includegraphics[width=\linewidth]{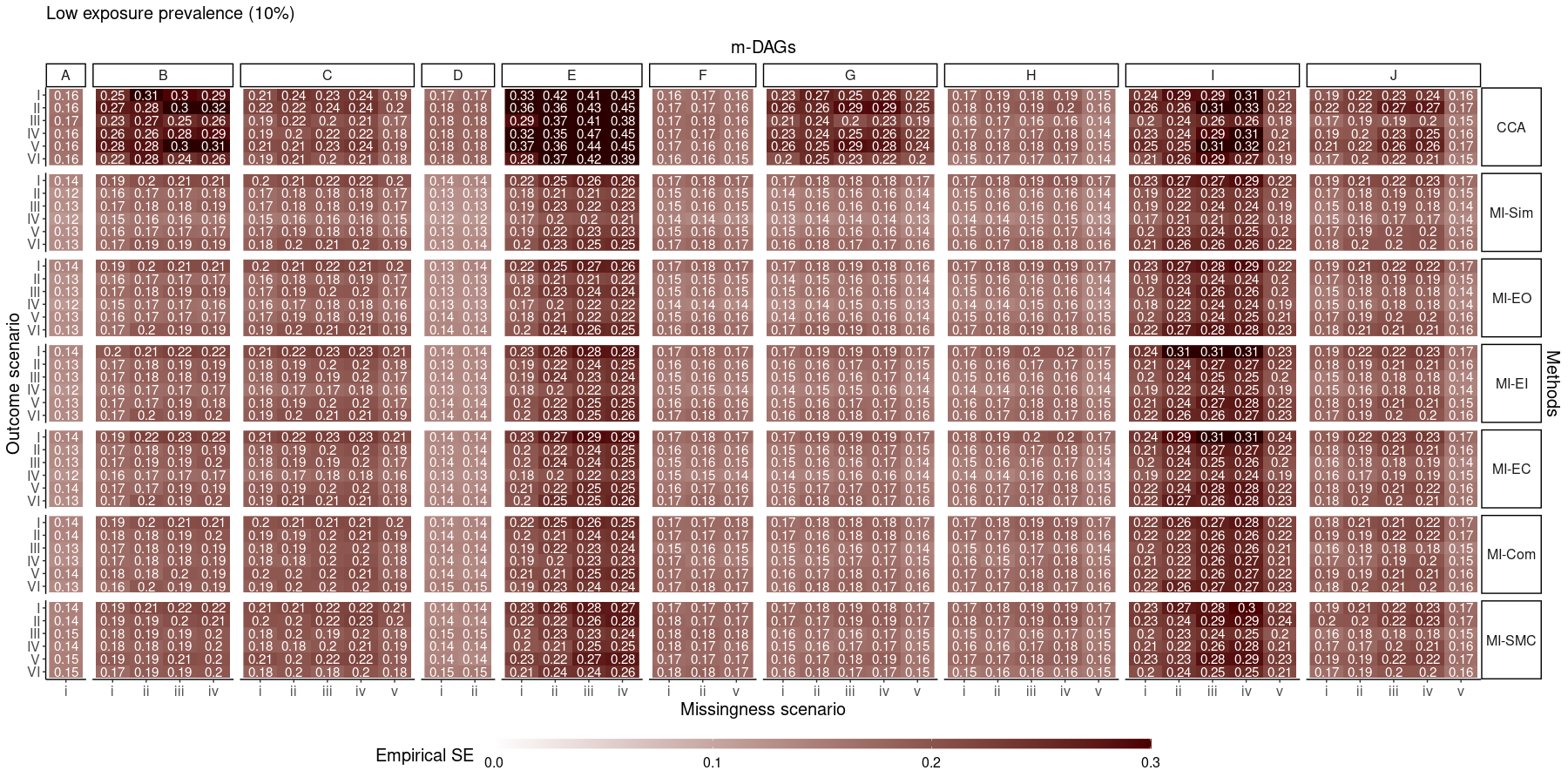}
\includegraphics[width=\linewidth]{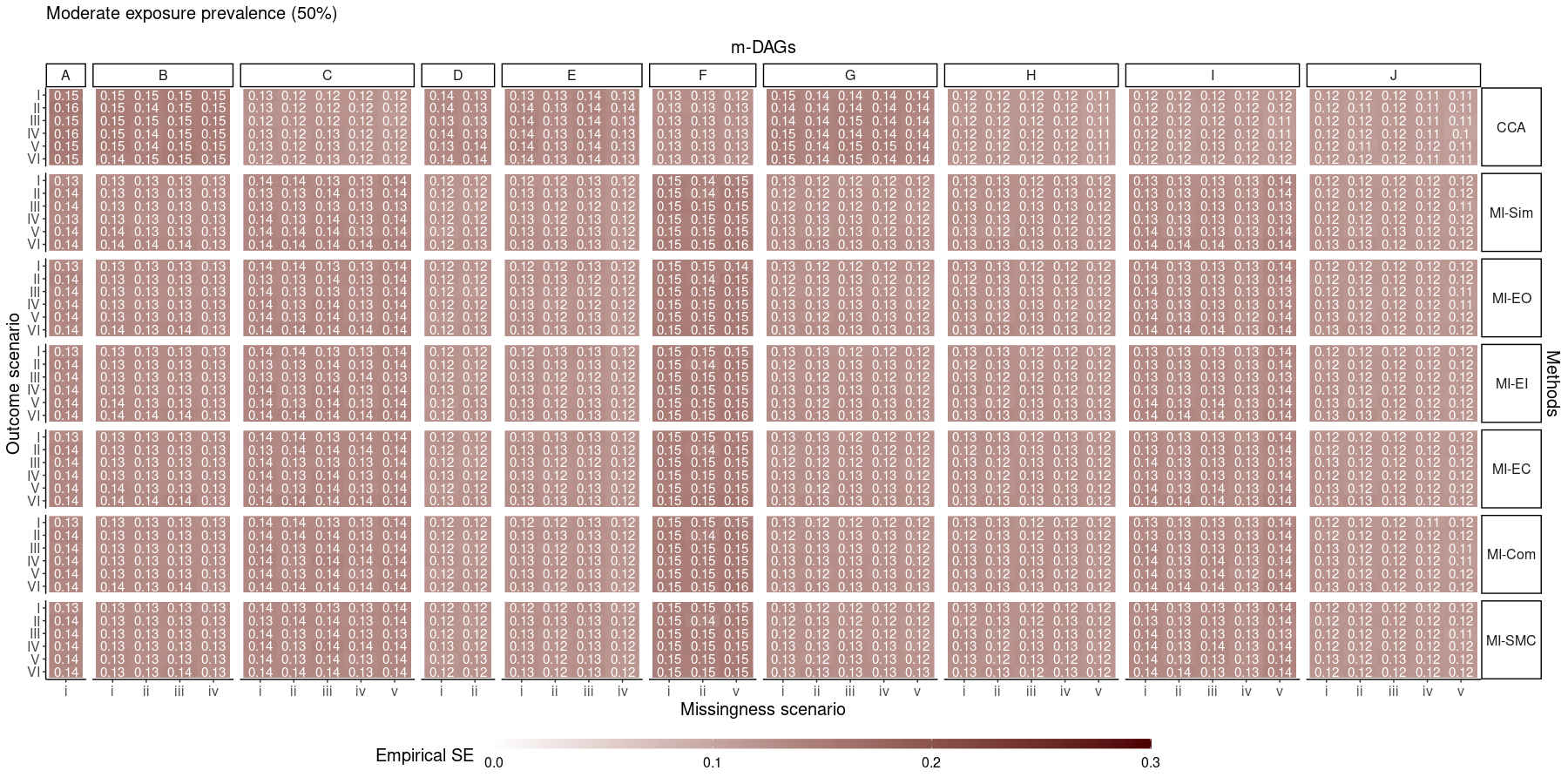}
\caption{Simulation study results: empirical standard error (SE) for missing data methods in canonical missingness directed acyclic graphs (m-DAGs) for low (top panel) and moderate (bottom panel) exposure prevalence scenarios, across a range of outcome and missingness scenarios.}
\label{EmpSE}
\end{figure}

\begin{figure}[h!]
\centerfloat
\includegraphics[width=\linewidth]{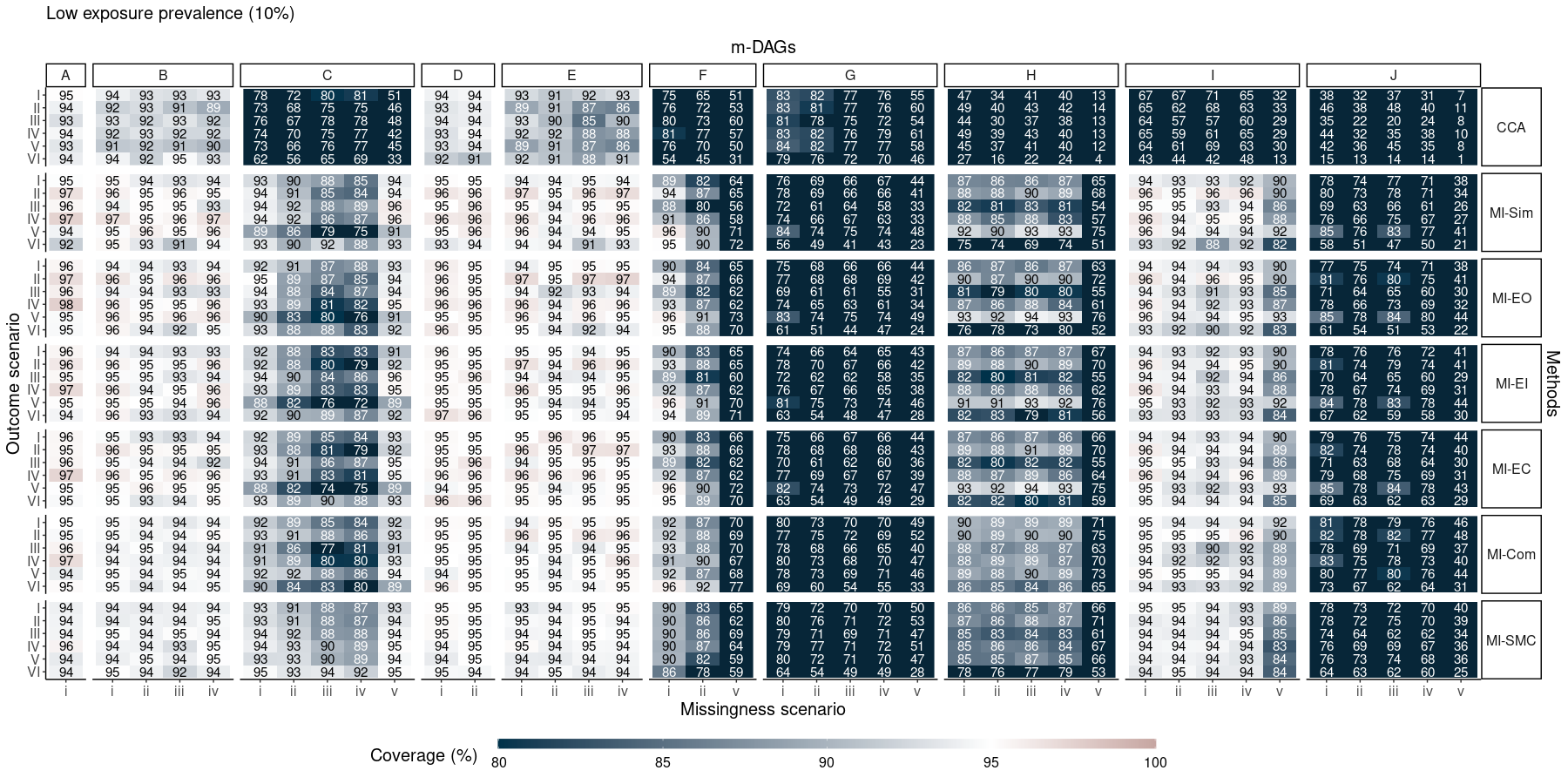}
\includegraphics[width=\linewidth]{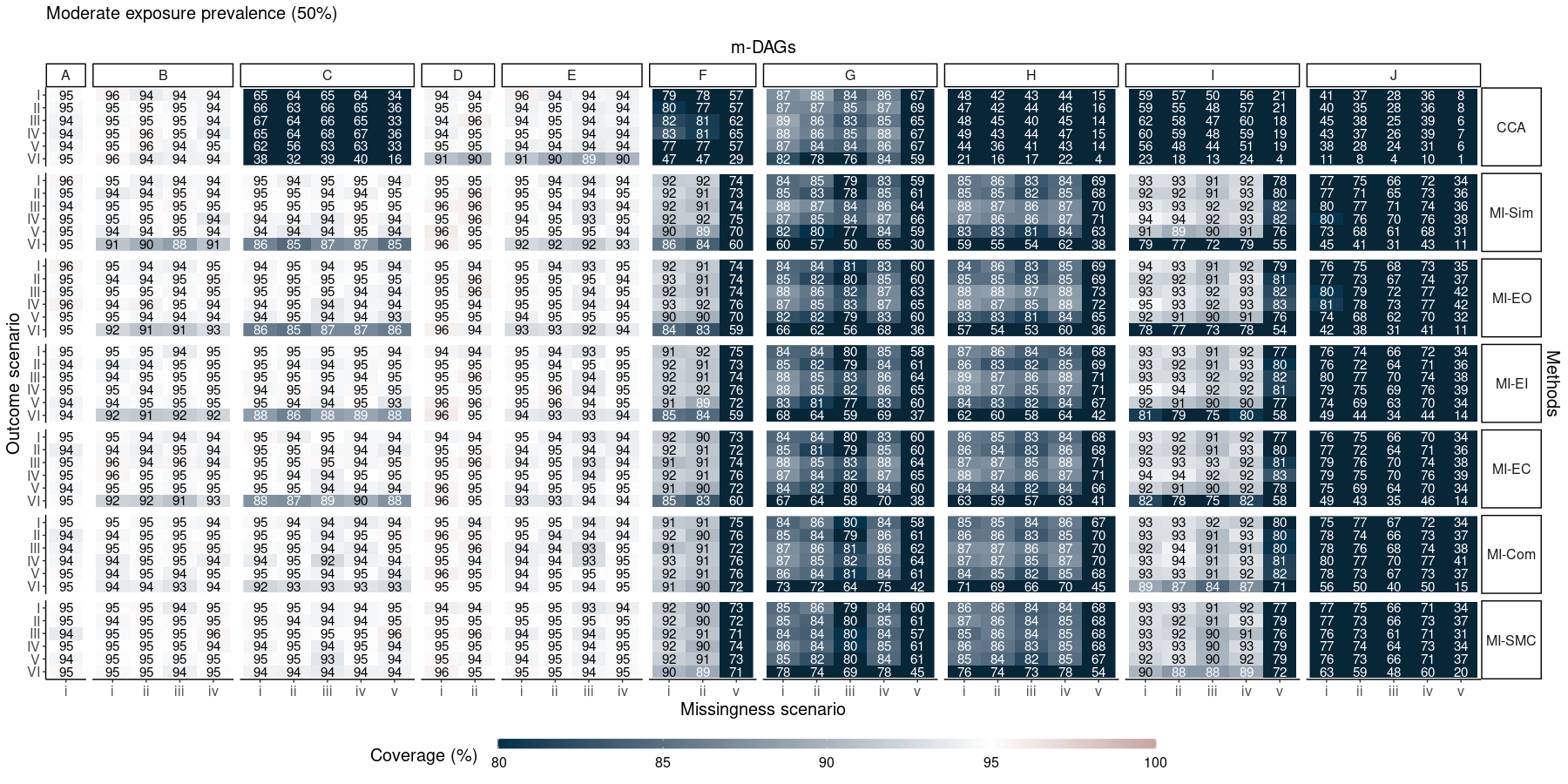}
\caption{Simulation study results: coverage probability (\%) for missing data methods in canonical missingness directed acyclic graphs (m-DAGs) for low (top panel) and moderate (bottom panel) exposure prevalence scenarios, across a range of outcome and missingness scenarios.}
\label{Coverage}
\end{figure}

\begin{figure*}[!htbp]
  \centering
  \includegraphics[width=0.5\linewidth]{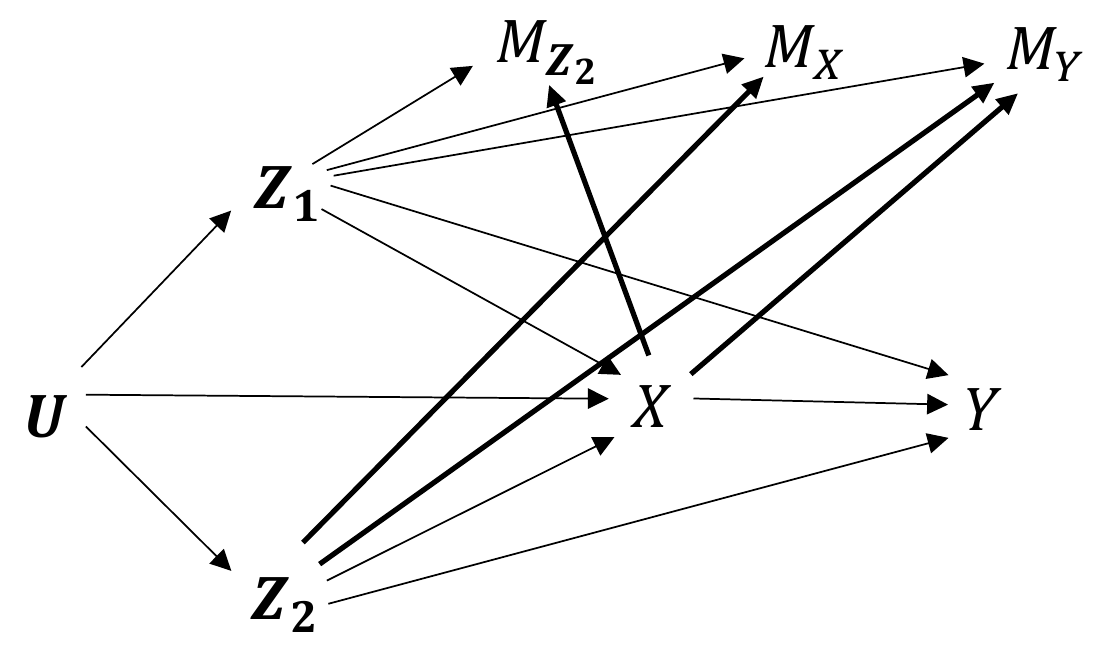}
  \caption{Canonical missingness directed acyclic graph (m-DAG) B for a general point-exposure study assuming no unmeasured common causes for missingness indicators}
  \label{m-DAG.B}
\end{figure*}

\begin{figure*}[!htbp]
  \centering
  \includegraphics[width=0.5\linewidth]{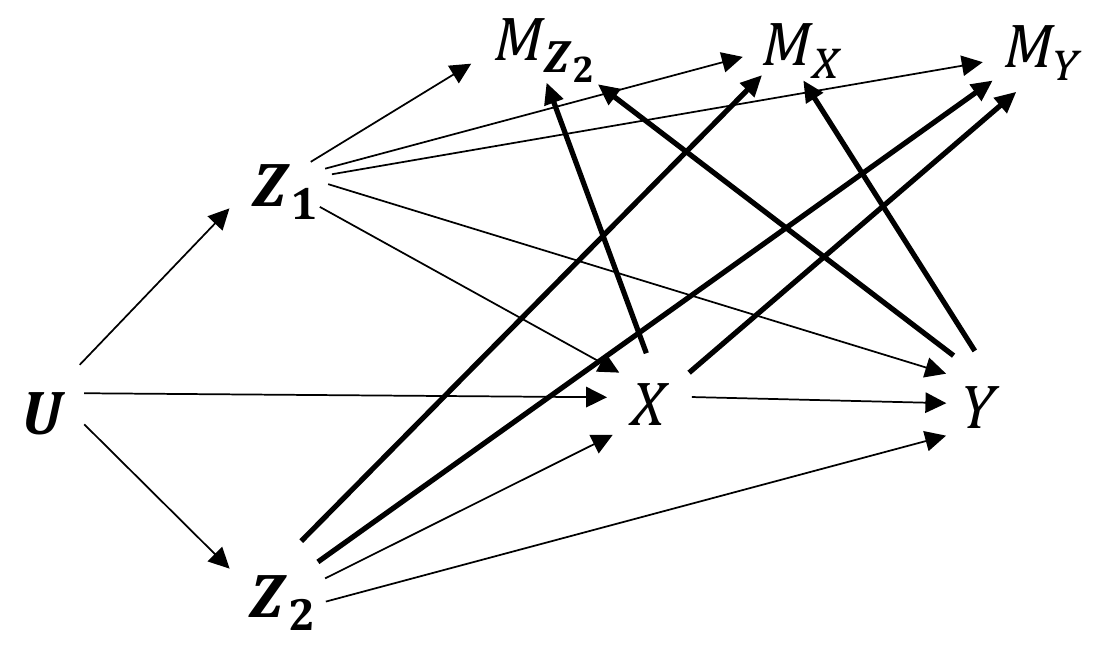}
  \caption{Canonical m-DAG C for a general point-exposure study assuming no unmeasured common causes for missingness indicators}
  \label{m-DAG.C}
\end{figure*}

\begin{figure*}[!htbp]
  \centering
  \includegraphics[width=0.5\linewidth]{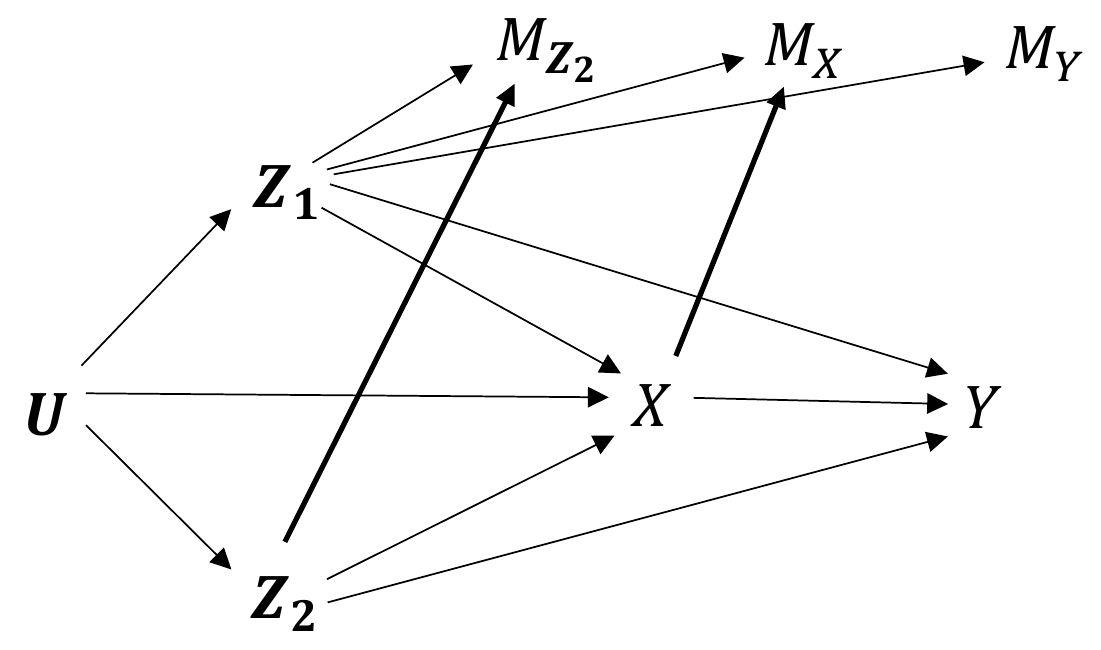}
  \caption{Canonical m-DAG D for a general point-exposure study assuming no unmeasured common causes for missingness indicators}
  \label{m-DAG.D}
\end{figure*}

\begin{figure*}[!htbp]
  \centering
  \includegraphics[width=0.5\linewidth]{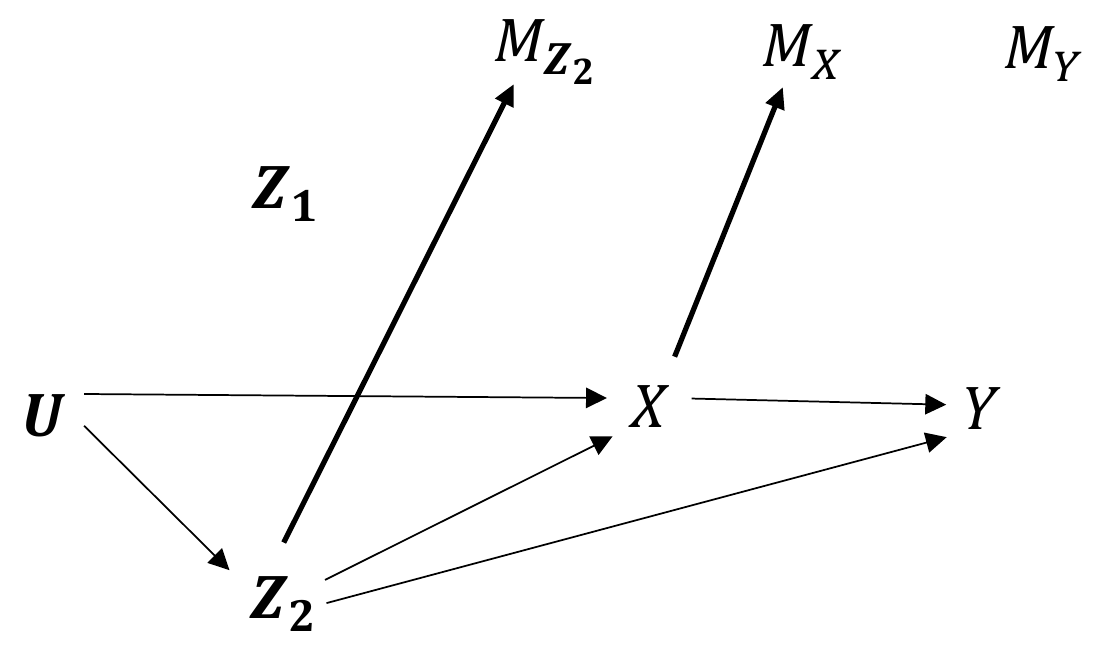}
  \caption{m-DAG D'}
  \label{m-DAG.D'}
\end{figure*}

\begin{figure*}[!htbp]
  \centering
  \includegraphics[width=0.5\linewidth]{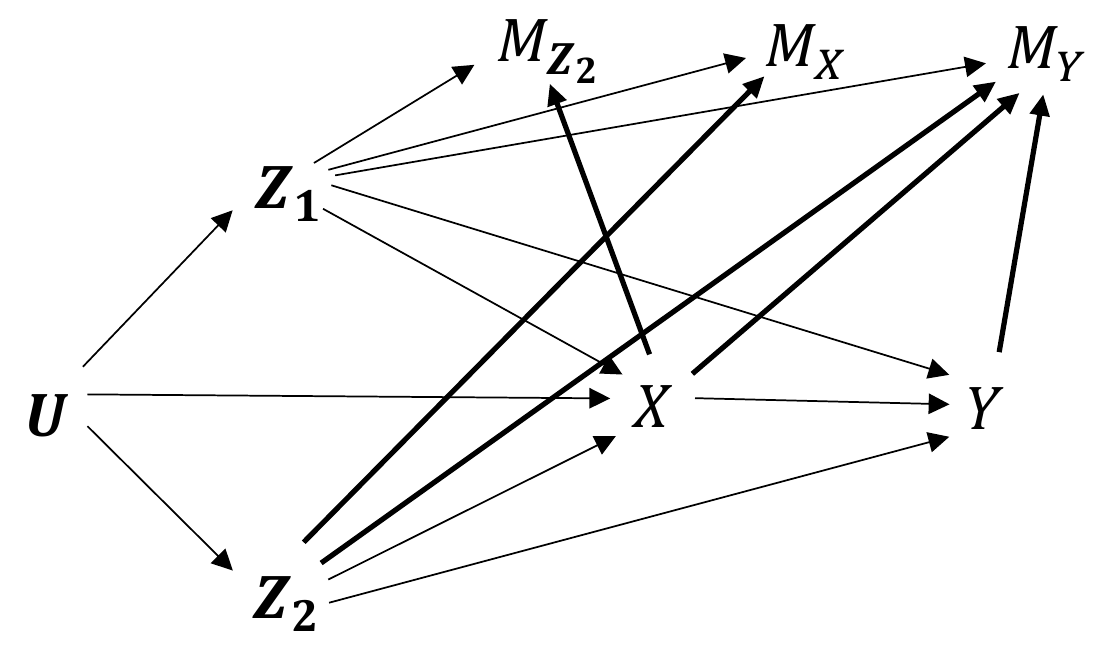}
  \caption{Canonical m-DAG G for a general point-exposure study assuming no unmeasured common causes for missingness indicators}
  \label{m-DAG.G}
\end{figure*}

\begin{figure*}[!htbp]
  \centering
  \includegraphics[width=0.5\linewidth]{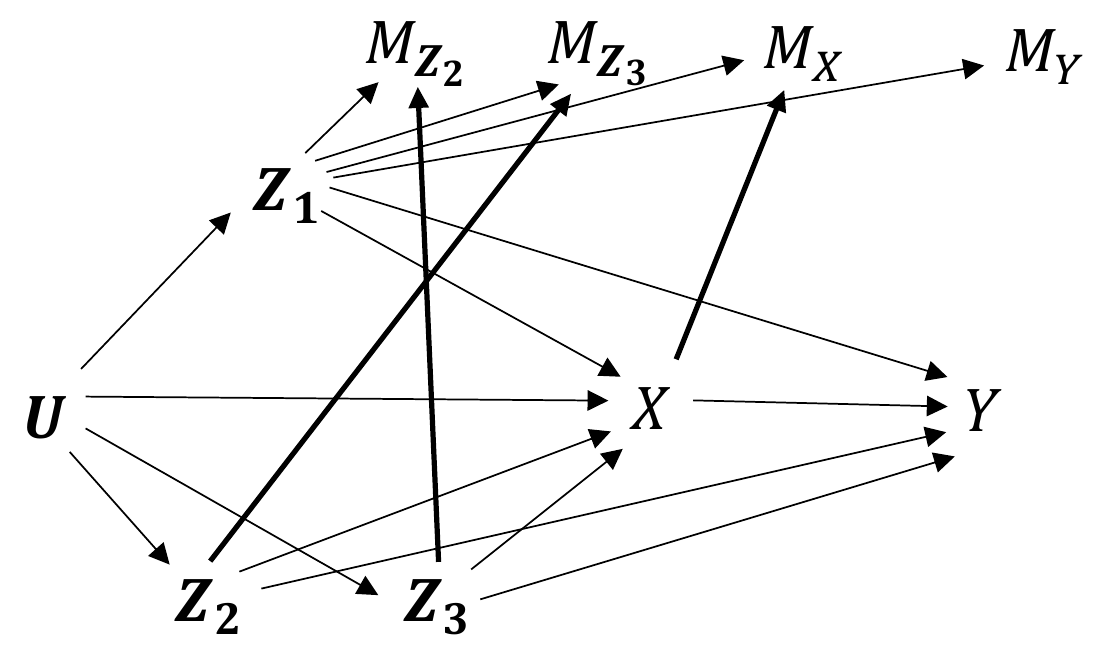}
  \caption{Canonical m-DAG D'' for a general point-exposure study assuming no unmeasured common causes for missingness indicators}
  \label{m-DAG.D''}
\end{figure*}

\bibliographystyle{unsrtnat}
\bibliography{Main}  

\begin{thebibliography}{49}
\providecommand{\natexlab}[1]{#1}
\providecommand{\url}[1]{\texttt{#1}}
\expandafter\ifx\csname urlstyle\endcsname\relax
  \providecommand{\doi}[1]{doi: #1}\else
  \providecommand{\doi}{doi: \begingroup \urlstyle{rm}\Url}\fi

\bibitem[Mohan et~al.(2013)Mohan, Pearl, and Tian]{mohan2013graphical}
Karthika Mohan, Judea Pearl, and Jin Tian.
\newblock Graphical models for inference with missing data.
\newblock \emph{Advances in neural information processing systems}, 26, 2013.

\bibitem[Mohan and Pearl(2014{\natexlab{a}})]{mohan2014graphical}
Karthika Mohan and Judea Pearl.
\newblock Graphical models for recovering probabilistic and causal queries from
  missing data.
\newblock \emph{Advances in Neural Information Processing Systems}, 27,
  2014{\natexlab{a}}.

\bibitem[Mohan and Pearl(2021)]{mohan2021graphical}
Karthika Mohan and Judea Pearl.
\newblock Graphical models for processing missing data.
\newblock \emph{Journal of the American Statistical Association}, 116\penalty0
  (534):\penalty0 1023--1037, 2021.

\bibitem[Rubin(2004)]{rubin2004multiple}
Donald~B Rubin.
\newblock \emph{Multiple imputation for nonresponse in surveys}, volume~81.
\newblock John Wiley \& Sons, 2004.

\bibitem[Seaman et~al.(2013)Seaman, Galati, Jackson, and
  Carlin]{seaman2013meant}
Shaun Seaman, John Galati, Dan Jackson, and John Carlin.
\newblock What is meant by “missing at random”?
\newblock \emph{Statistical Science}, 28\penalty0 (2):\penalty0 257--268, 2013.

\bibitem[Little and Rubin(2019)]{little2019statistical}
Roderick~JA Little and Donald~B Rubin.
\newblock \emph{Statistical analysis with missing data}, volume 793.
\newblock John Wiley \& Sons, https://doi.org/10.1002/9781119482260, 2019.

\bibitem[Mohan and Pearl(2014{\natexlab{b}})]{mohan2014testability}
Karthika Mohan and Judea Pearl.
\newblock On the testability of models with missing data.
\newblock In \emph{Artificial Intelligence and Statistics}, pages 643--650.
  PMLR, 2014{\natexlab{b}}.

\bibitem[Thoemmes and Mohan(2015)]{thoemmes2015graphical}
Felix Thoemmes and Karthika Mohan.
\newblock Graphical representation of missing data problems.
\newblock \emph{Structural Equation Modeling: A Multidisciplinary Journal},
  22\penalty0 (4):\penalty0 631--642, 2015.

\bibitem[Tian(2015)]{tian2015missing}
Jin Tian.
\newblock Missing at random in graphical models.
\newblock In \emph{Artificial Intelligence and Statistics}, pages 977--985.
  PMLR, 2015.

\bibitem[Shpitser et~al.(2015)Shpitser, Mohan, and Pearl]{shpitser2015missing}
Ilya Shpitser, Karthika Mohan, and Judea Pearl.
\newblock Missing data as a causal and probabilistic problem.
\newblock Technical report, CALIFORNIA UNIV LOS ANGELES DEPT OF COMPUTER
  SCIENCE, 2015.

\bibitem[Moreno-Betancur et~al.(2018)Moreno-Betancur, Lee, Leacy, White,
  Simpson, and Carlin]{moreno2018canonical}
Margarita Moreno-Betancur, Katherine~J Lee, Finbarr~P Leacy, Ian~R White,
  Julie~A Simpson, and John~B Carlin.
\newblock Canonical causal diagrams to guide the treatment of missing data in
  epidemiologic studies.
\newblock \emph{American journal of epidemiology}, 187\penalty0 (12):\penalty0
  2705--2715, 2018.

\bibitem[Robins(1986)]{robins1986new}
James Robins.
\newblock A new approach to causal inference in mortality studies with a
  sustained exposure period—application to control of the healthy worker
  survivor effect.
\newblock \emph{Mathematical modelling}, 7\penalty0 (9-12):\penalty0
  1393--1512, 1986.

\bibitem[Snowden et~al.(2011)Snowden, Rose, and
  Mortimer]{snowden2011implementation}
Jonathan~M Snowden, Sherri Rose, and Kathleen~M Mortimer.
\newblock Implementation of g-computation on a simulated data set:
  demonstration of a causal inference technique.
\newblock \emph{American journal of epidemiology}, 173\penalty0 (7):\penalty0
  731--738, 2011.

\bibitem[Hernán~MA(2020)]{Causal2020}
Robins~JM Hernán~MA.
\newblock \emph{Causal Inference: What If}.
\newblock Chapman \& Hall/CRC, Boca Ratonn, 2020.

\bibitem[Sterne et~al.(2009)Sterne, White, Carlin, Spratt, Royston, Kenward,
  Wood, and Carpenter]{sterne2009multiple}
Jonathan~AC Sterne, Ian~R White, John~B Carlin, Michael Spratt, Patrick
  Royston, Michael~G Kenward, Angela~M Wood, and James~R Carpenter.
\newblock Multiple imputation for missing data in epidemiological and clinical
  research: potential and pitfalls.
\newblock \emph{Bmj}, 338, 2009.

\bibitem[Lee et~al.(2021)Lee, Tilling, Cornish, Little, Bell, Goetghebeur,
  Hogan, Carpenter, et~al.]{lee2021framework}
Katherine~J Lee, Kate~M Tilling, Rosie~P Cornish, Roderick~JA Little, Melanie~L
  Bell, Els Goetghebeur, Joseph~W Hogan, James~R Carpenter, et~al.
\newblock Framework for the treatment and reporting of missing data in
  observational studies: The treatment and reporting of missing data in
  observational studies framework.
\newblock \emph{Journal of clinical epidemiology}, 134:\penalty0 79--88, 2021.

\bibitem[Zhang et~al.(2022)Zhang, Dashti, Carlin, Lee, and
  Moreno-Betancur]{zhang2022should}
Jiaxin Zhang, Ghazaleh Dashti, John Carlin, Katherine Lee, and Margarita
  Moreno-Betancur.
\newblock Should multiple imputation be stratified by exposure group when
  estimating causal effects via outcome regression in observational studies?
\newblock \emph{Research Square}, PREPRINT, 2022.

\bibitem[Meng(1994)]{meng1994multiple}
Xiao-Li Meng.
\newblock Multiple-imputation inferences with uncongenial sources of input.
\newblock \emph{Statistical Science}, pages 538--558, 1994.

\bibitem[Bartlett et~al.(2015)Bartlett, Seaman, White, Carpenter, and
  Initiative*]{bartlett2015multiple1}
Jonathan~W Bartlett, Shaun~R Seaman, Ian~R White, James~R Carpenter, and
  Alzheimer's Disease~Neuroimaging Initiative*.
\newblock Multiple imputation of covariates by fully conditional specification:
  accommodating the substantive model.
\newblock \emph{Statistical methods in medical research}, 24\penalty0
  (4):\penalty0 462--487, 2015.

\bibitem[Von~Hippel(2009)]{von20098}
Paul~T Von~Hippel.
\newblock 8. how to impute interactions, squares, and other transformed
  variables.
\newblock \emph{Sociological methodology}, 39\penalty0 (1):\penalty0 265--291,
  2009.

\bibitem[Seaman et~al.(2012)Seaman, Bartlett, and White]{seaman2012multiple}
Shaun~R Seaman, Jonathan~W Bartlett, and Ian~R White.
\newblock Multiple imputation of missing covariates with non-linear effects and
  interactions: an evaluation of statistical methods.
\newblock \emph{BMC medical research methodology}, 12\penalty0 (1):\penalty0
  1--13, 2012.

\bibitem[Goldstein et~al.(2014)Goldstein, Carpenter, and
  Browne]{goldstein2014fitting}
Harvey Goldstein, James~R Carpenter, and William~J Browne.
\newblock Fitting multilevel multivariate models with missing data in responses
  and covariates that may include interactions and non-linear terms.
\newblock \emph{Journal of the Royal Statistical Society: Series A (Statistics
  in Society)}, 177\penalty0 (2):\penalty0 553--564, 2014.

\bibitem[Tilling et~al.(2016)Tilling, Williamson, Spratt, Sterne, and
  Carpenter]{tilling2016appropriate}
Kate Tilling, Elizabeth~J Williamson, Michael Spratt, Jonathan~AC Sterne, and
  James~R Carpenter.
\newblock Appropriate inclusion of interactions was needed to avoid bias in
  multiple imputation.
\newblock \emph{Journal of clinical epidemiology}, 80:\penalty0 107--115, 2016.

\bibitem[Van~Buuren(2007)]{van2007multiple}
Stef Van~Buuren.
\newblock Multiple imputation of discrete and continuous data by fully
  conditional specification.
\newblock \emph{Statistical methods in medical research}, 16\penalty0
  (3):\penalty0 219--242, 2007.

\bibitem[White et~al.(2011)White, Royston, and Wood]{white2011multiple}
Ian~R White, Patrick Royston, and Angela~M Wood.
\newblock Multiple imputation using chained equations: issues and guidance for
  practice.
\newblock \emph{Statistics in medicine}, 30\penalty0 (4):\penalty0 377--399,
  2011.

\bibitem[Bartlett and Morris(2015)]{bartlett2015multiple2}
Jonathan~W Bartlett and Tim~P Morris.
\newblock Multiple imputation of covariates by substantive-model compatible
  fully conditional specification.
\newblock \emph{The Stata Journal}, 15\penalty0 (2):\penalty0 437--456, 2015.

\bibitem[Patton et~al.(2002)Patton, Coffey, Carlin, et~al.]{patton2002cannabis}
George~C Patton, Carolyn Coffey, John~B Carlin, et~al.
\newblock Cannabis use and mental health in young people: cohort study.
\newblock \emph{Bmj}, 325\penalty0 (7374):\penalty0 1195--1198, 2002.

\bibitem[Lewis and Pelosi(1992)]{lewis1992manual}
G~Lewis and AJ~Pelosi.
\newblock The manual of cis-r.
\newblock \emph{London: Institute of Psychiatry}, 1992.

\bibitem[Pearl(2009)]{pearl2009causality}
Judea Pearl.
\newblock \emph{Causality}.
\newblock Cambridge university press, 2009.

\bibitem[Hall(1986)]{hall1986number}
Peter Hall.
\newblock On the number of bootstrap simulations required to construct a
  confidence interval.
\newblock \emph{The Annals of Statistics}, pages 1453--1462, 1986.

\bibitem[Andrews and Buchinsky(2000)]{andrews2000three}
Donald~WK Andrews and Moshe Buchinsky.
\newblock A three-step method for choosing the number of bootstrap repetitions.
\newblock \emph{Econometrica}, 68\penalty0 (1):\penalty0 23--51, 2000.

\bibitem[Andrews and Buchinsky(2001)]{andrews2001evaluation}
Donald~WK Andrews and Moshe Buchinsky.
\newblock Evaluation of a three-step method for choosing the number of
  bootstrap repetitions.
\newblock \emph{Journal of Econometrics}, 103\penalty0 (1-2):\penalty0
  345--386, 2001.

\bibitem[Buuren and Groothuis-Oudshoorn(2010)]{buuren2010mice}
S~van Buuren and Karin Groothuis-Oudshoorn.
\newblock mice: Multivariate imputation by chained equations in r.
\newblock \emph{Journal of statistical software}, pages 1--68, 2010.

\bibitem[Morris et~al.(2019)Morris, White, and Crowther]{morris2019using}
Tim~P Morris, Ian~R White, and Michael~J Crowther.
\newblock Using simulation studies to evaluate statistical methods.
\newblock \emph{Statistics in medicine}, 38\penalty0 (11):\penalty0 2074--2102,
  2019.

\bibitem[Leacy(2016)]{leacy2016multiple}
FP~Leacy.
\newblock Multiple imputation under missing not at random assumptions via fully
  conditional specification [dissertation].
\newblock \emph{Cambridge, United Kingdom: University of Cambridge}, 2016.

\bibitem[Moreno-Betancur and Chavance(2016)]{moreno2016sensitivity}
M~Moreno-Betancur and M~Chavance.
\newblock Sensitivity analysis of incomplete longitudinal data departing from
  the missing at random assumption: Methodology and application in a clinical
  trial with drop-outs.
\newblock \emph{Statistical Methods in Medical Research}, 25\penalty0
  (4):\penalty0 1471--1489, 2016.

\bibitem[Tompsett et~al.(2018)Tompsett, Leacy, Moreno-Betancur, Heron, and
  White]{tompsett2018use}
Daniel~Mark Tompsett, Finbarr Leacy, Margarita Moreno-Betancur, Jon Heron, and
  Ian~R White.
\newblock On the use of the not-at-random fully conditional specification
  (narfcs) procedure in practice.
\newblock \emph{Statistics in medicine}, 37\penalty0 (15):\penalty0 2338--2353,
  2018.

\bibitem[Bhattacharya et~al.(2020)Bhattacharya, Nabi, Shpitser, and
  Robins]{bhattacharya2020identification}
Rohit Bhattacharya, Razieh Nabi, Ilya Shpitser, and James~M Robins.
\newblock Identification in missing data models represented by directed acyclic
  graphs.
\newblock In \emph{Uncertainty in Artificial Intelligence}, pages 1149--1158.
  PMLR, 2020.

\bibitem[Tikka et~al.(2019)Tikka, Hyttinen, and Karvanen]{tikka2019causal}
Santtu Tikka, Antti Hyttinen, and Juha Karvanen.
\newblock Causal effect identification from multiple incomplete data sources: A
  general search-based approach.
\newblock \emph{arXiv preprint arXiv:1902.01073}, 2019.

\bibitem[Blake et~al.(2020)Blake, Leyrat, Mansfield, Seaman, Tomlinson,
  Carpenter, and Williamson]{blake2020propensity}
Helen~A Blake, Cl{\'e}mence Leyrat, Kathryn~E Mansfield, Shaun Seaman, Laurie~A
  Tomlinson, James Carpenter, and Elizabeth~J Williamson.
\newblock Propensity scores using missingness pattern information: a practical
  guide.
\newblock \emph{Statistics in medicine}, 39\penalty0 (11):\penalty0 1641--1657,
  2020.

\bibitem[Dashti et~al.(2021)Dashti, Lee, Simpson, White, Carlin, and
  Moreno-Betancur]{dashti2021handling}
S~Ghazaleh Dashti, Katherine~J Lee, Julie~A Simpson, Ian~R White, John~B
  Carlin, and Margarita Moreno-Betancur.
\newblock Handling missing data when estimating causal effects with targeted
  maximum likelihood estimation.
\newblock \emph{arXiv preprint arXiv:2112.05274}, 2021.

\bibitem[Bartlett and Hughes(2020)]{bartlett2020bootstrap}
Jonathan~W Bartlett and Rachael~A Hughes.
\newblock Bootstrap inference for multiple imputation under uncongeniality and
  misspecification.
\newblock \emph{Statistical methods in medical research}, 29\penalty0
  (12):\penalty0 3533--3546, 2020.

\bibitem[von Hippel and Bartlett(2021)]{von2021maximum}
Paul~T von Hippel and Jonathan~W Bartlett.
\newblock Maximum likelihood multiple imputation: Faster imputations and
  consistent standard errors without posterior draws.
\newblock \emph{Statistical Science}, 36\penalty0 (3):\penalty0 400--420, 2021.

\bibitem[Tian and Shpitser(2010)]{tian2010identifying}
Jin Tian and Ilya Shpitser.
\newblock On identifying causal effects.
\newblock \emph{Heuristics, Probability and Causality: A Tribute to Judea Pearl
  (R. Dechter, H. Geffner and J. Halpern, eds.). College Publications, UK},
  pages 415--444, 2010.

\bibitem[Cheung et~al.(2017)Cheung, Ten~Klooster, Smit, de~Vries, and
  Pieterse]{cheung2017impact}
Kei~Long Cheung, Peter~M Ten~Klooster, Cees Smit, Hein de~Vries, and Marcel~E
  Pieterse.
\newblock The impact of non-response bias due to sampling in public health
  studies: A comparison of voluntary versus mandatory recruitment in a dutch
  national survey on adolescent health.
\newblock \emph{BMC public health}, 17\penalty0 (1):\penalty0 1--10, 2017.

\bibitem[Gorman et~al.(2014)Gorman, Leyland, McCartney, White, Katikireddi,
  Rutherford, Graham, and Gray]{gorman2014assessing}
Emma Gorman, Alastair~H Leyland, Gerry McCartney, Ian~R White, Srinivasa~Vittal
  Katikireddi, Lisa Rutherford, Lesley Graham, and Linsay Gray.
\newblock Assessing the representativeness of population-sampled health surveys
  through linkage to administrative data on alcohol-related outcomes.
\newblock \emph{American journal of epidemiology}, 180\penalty0 (9):\penalty0
  941--948, 2014.

\bibitem[Lemmens et~al.(1988)Lemmens, Tan, and Knibbe]{lemmens1988bias}
PHHM Lemmens, ES~Tan, and RA~Knibbe.
\newblock Bias due to non-response in a dutch survey on alcohol consumption.
\newblock \emph{British journal of addiction}, 83\penalty0 (9):\penalty0
  1069--1077, 1988.

\bibitem[Caetano(2001)]{caetano2001non}
Raul Caetano.
\newblock Non-response in alcohol and drug surveys: a research topic in need of
  further attention.
\newblock \emph{Addiction}, 96\penalty0 (11):\penalty0 1541--1545, 2001.

\bibitem[Zhao et~al.(2009)Zhao, Stockwell, and Macdonald]{zhao2009non}
Jinhui Zhao, TIM Stockwell, and Scott Macdonald.
\newblock Non--response bias in alcohol and drug population surveys.
\newblock \emph{Drug and alcohol review}, 28\penalty0 (6):\penalty0 648--657,
  2009.

\end{thebibliography}






\end{document}